\documentclass[runningheads]{llncs}
\usepackage[T1]{fontenc}
\usepackage{graphicx}
\usepackage[outline]{contour}
\usepackage[skins]{tcolorbox}
\usepackage{quiver}
\usepackage{tikz}
\usepackage{tikz-cd}
\usepackage{hyperref}

\usepackage{color}

\usepackage{makecell}
\usepackage{amsmath}
\usepackage{amssymb}
\usepackage{amsfonts}

\usepackage{algorithm}
\usepackage{algpseudocode}
\algnewcommand\Input{\textbf{Input: }}
\algnewcommand\Output{\textbf{Output: }}
\algdef{SE}[MATCH]{Match}{EndMatch}[1]{\textbf{match} #1}{\textbf{end match}}
\algdef{SE}[CASE]{Case}{EndCase}[1]{\textbf{case #1:} }{}
\algtext*{EndCase}

\algtext*{EndFunction}

\renewcommand{\phi} {\varphi}

\newcommand{\AP}[0]{\mathit{AP}}
\newcommand{\NP}[0]{\mathit{NP}}
\newcommand{\defeq}[0]{\overset{\mathrm{def}}{=}}

\newcommand{\opX}[0]{\mathop{\bigcirc}}
\newcommand{\opWX}[0]{\mathop{\text{\textcircled{\scriptsize $\mathsf{w}$}}}}
\newcommand{\opU}[0]{\mathbin{\mathcal{U}}}
\newcommand{\opBox}[0]{\mathop{\Box}}
\newcommand{\opDiamond}[0]{\mathop{\lozenge}}

\newcommand{\opF}[0]{\mathop{\mathsf{Front}}}
\newcommand{\opB}[0]{\mathop{\mathsf{Back}}}
\newcommand{\opL}[0]{\mathop{\mathsf{Left}}}
\newcommand{\opR}[0]{\mathop{\mathsf{Right}}}

\newcommand{\opAt}[1]{@_{#1}}
\newcommand{\opBind}[1]{\mathop{\downarrow\! #1}}

\newcommand{\Pos}[0]{\mathit{Pos}}

\newcommand{\subst}[3]{#1[#2 \mapsto #3]}

\newcommand{\safe}[0]{\mathit{safe}}

\newcommand{\POV}[0]{\mathit{POV}}
\newcommand{\SV}[0]{\mathit{SV}}

\newcommand{\A}[1]{A_{\mathit{#1}}}

\spnewtheorem{preconditions}{Preconditions}{\itshape}{}

\newif\ifcommentson\commentsonfalse
\newif\ifcommentson\commentsontrue
\ifcommentson
\newcommand{\Radu}[1]{\marginpar{\footnotesize \color{red} {\bf Radu:} \textsf{\scriptsize #1}}}
\newcommand{\Yoav}[1]{\marginpar{\footnotesize \color{red} {\bf Yoav:} \textsf{\scriptsize #1}}}
\newcommand{\Kevin}[1]{\marginpar{\footnotesize \color{red} {\bf Kevin:} \textsf{\scriptsize #1}}}
\newcommand{\Rose}[1]{\marginpar{\footnotesize \color{red} {\bf Rose:} \textsf{\scriptsize #1}}}
\newcommand{\Yusuke}[1]{\marginpar{\footnotesize \color{red} {\bf Yusuke:} \textsf{\scriptsize #1}}}
\newcommand{\Ichiro}[1]{\marginpar{\footnotesize \color{red} {\bf Ichiro:} \textsf{\scriptsize #1}}}
\else
\newcommand{\Radu}[1]{}
\newcommand{\Yoav}[1]{}
\newcommand{\Kevin}[1]{}
\newcommand{\Rose}[1]{}
\newcommand{\Yusuke}[1]{}
\newcommand{\Ichiro}[1]{}
\fi

\newcommand{\commentOut}[1]{}

\newcommand{\pagelimitmarker}[1]{~\\ {\ifthenelse{\thepage>#1}{\textcolor{red}{\Huge Exceeding the page limit}}{\marginpar{\textcolor{red}{\bf Within the page limit}}}}~\\ {\marginpar{\textcolor{red}{~Page Limit: {\bf#1}}}}~\\ {\marginpar{\textcolor{red}{~Current: $\bf\thepage$}}}}

\newif\ifcommentson\commentsonfalse
\usepackage{mathtools}

\begin{document}
\title{Hybrid Spatiotemporal Logic for Automotive Applications: Modeling and Model-Checking}
\titlerunning{Hybrid Spatiotemporal Logic for Automotive Applications}
\author{Radu-Florin Tulcan\inst{1}\orcidID{0000-0002-6966-1136} \and
Rose Bohrer\inst{2}\orcidID{0000-0001-5201-9895} \and
Yo\`av Montacute\inst{3}\orcidID{0000-0001-9814-7323}\and
Kevin Zhou \inst{3}\orcidID{0000-0003-4907-9453}  \and \\
Yusuke Kawamoto\inst{2}\orcidID{0000-0002-2151-9560} \and
Ichiro Hasuo\inst{3}\orcidID{0000-0002-8300-4650}}
\authorrunning{R. Tulcan et al.}
\institute{TU Wien, Theory and Logic Group \and
National Institute of Advanced Industrial Science and Technology (AIST) \and
National Institute of Informatics (NII)}
\maketitle
\begin{abstract}
We introduce a hybrid spatiotemporal logic for automotive safety applications (HSTL), focused on highway driving.
Spatiotemporal logic features specifications about vehicles throughout space and time, while hybrid logic enables precise references to individual vehicles and their historical positions.
We define the semantics of HSTL and provide a baseline model-checking algorithm for it.
We propose two optimized model-checking algorithms, which reduce the search space based on the reachable states and possible transitions from one state to another.
All three model-checking algorithms are evaluated on a series of common driving scenarios such as safe following, safe crossings, overtaking, and platooning.
An exponential performance improvement is observed for the optimized algorithms.
\end{abstract}

\keywords{Spatiotemporal logic \and Hybrid logic \and Autonomous vehicles.}

\section{Introduction}
\label{sub:intro}
Modern automotives are safety-critical cyber-physical systems (CPSs). From consumer cars to autonomous vehicles, software is a core component, responsible for protecting the safety of human passengers.
For this reason, the formal verification of these systems' correctness has evolved into a mature field with diverse approaches such as monitoring~\cite{pek2020using}, test generation~\cite{DBLP:conf/ivs/TuncaliFIK18}, model-checking~\cite{fan2016automatic}, and theorem-proving~\cite{platzer2018logical,wang2015improved,hasuo2022goal}.
This diversity of approaches represents several fundamental tradeoffs: highly rigorous approaches are often less flexible, while highly expressive approaches are often require extensive user expertise.
More subtly, different approaches are optimized for different components of an automotive software system.
High-precision models such as hybrid-dynamical systems\footnote{Hybrid-dynamical systems, which combine discrete and continuous system dynamics, should not be confused with hybrid logic, which introduces names to modal logic.} are tuned to the precise low-level decision-making needed to safely control a vehicle's physical actuators. However, this precision can become a burden for high-level motion planning, which is more combinatorial in nature and values efficient search among possible paths.
As automotive CPS increasingly enter real highways, it is essential to identify formal methods that can balance these tradeoffs, both to ensure that rigorous models can transfer into practice and to guarantee comprehensive safety from planning to control.
Discrete model-checking approaches hold promise to achieve this balance: they combine a firm foundation in formal logic with high automation.

We introduce \emph{Hybrid Spatiotemporal Logic for Automotive Safety} (\emph{HSTL}), a new logic designed to balance rigor and expressiveness with ease of model-checking.
Though the logic is quite general, its current emphasis is on highway driving.
In planning for highway driving, a discrete grid-based model of space is natural: planning is more concerned with navigating between discrete lanes and comprehending finite relationships between vehicles (such as one vehicle being in front of another).
Discrete grids enable efficient search among the many combinations of relationships between vehicles, long before precise coordinates in continuous space are ever considered.

As the name suggests, HSTL combines temporal, spatial, and hybrid logic features.
Temporal operators are widespread in CPS formal methods because they allow us to specify safety properties that hold always, goals which are achieved eventually, and combinations thereof.
Spatial operators allow specifying properties that hold at neighboring positions on a grid.
Hybrid logic operators allow assigning and using names for positions on the grid.
The resulting combination is greater than the sum of its parts, allowing complex specifications of vehicle motion over time, while maintaining a discrete model of motion that is amenable to search.
We advocate that this makes HSTL an ideal interface between planning and control: HSTL model-checking effectively produces high-level motion plans,
which can be consumed by following waypoint-based planning and eventually control stages.

We support this position by developing three increasingly sophisticated model-checking algorithms for HSTL and evaluating them on driving case studies such as following, intersection-crossing, overtaking, and platooning.
In modeling the case studies, we show HSTL is flexible enough to model a diverse range of concrete scenarios.
In evaluating algorithm performance, we show that fusing hybrid logic, spatial, and temporal operators does not increase the complexity of model-checking; on the contrary, it creates critical opportunities for optimization.
When hybrid logic operators are used to precisely specify trajectories, the model-checker can identify these trajectories and commit to them, drastically reducing its search space and improving its scalability.
We propose that the resulting algorithms are suitable for typical offline applications such as test case generation.

\section{Related Work}
\label{sub:related}

We discuss related works in environment modeling, logic, and CPS verification.

\paragraph{Environment modeling.}
The predominant model of driving environments is the scene graph (SG), which models objects' semantic relationships, including high-level spatial relationships \cite{woodlief2025closing,chang2021comprehensive} (SGs are not limited to driving applications).
Machine learning plays a significant role  both in generating SGs from raw visual data~\cite{malawade2022roadscene2vec} and in processing existing SGs, e.g., to assess risk \cite{yu2021scene} or predict behavior \cite{mylavarapu2020towards}.
Tool support for SGs is growing: the driving simulator CARLA~\cite{dosovitskiy2017carla} recently gained the SG generator CARLASGG \cite{woodlief2025closing} and the language Scenic~\cite{fremont2019scenic} supports end-user SG and scenario generation. 
More closely related to this work, temporal correctness specifications on SGs have begun to receive attention as well~\cite{toledo2024specifying}. We focus on enriching specifications to combine temporal, spatial, and hybrid reasoning, but conversely, we do not attempt to support the full SG data structures of modern SG tools like CARLASGG or Scenic.
In reasoning about scene graphs, manipulating discrete spatial relationships between objects is an important challenge; the \emph{grid-graph} structure in this paper is essentially a simplification of SGs that focuses on cardinal-direction spatial relationships amongst vehicles, lanes, and objects.

\paragraph{Hybrid and Spatiotemporal Logics.}
Hybrid logics extend modal logics with nominal formulas. In the simplest case, nominals give names to possible worlds.
The origins of hybrid logic are generally attributed to Prior's hybrid tense logic~\cite{prior1967past} and the name attributed to Blackburn's work~\cite{blackburn2000representation}, though our work is inspired more directly by Bra{\"u}ner's presentation~\cite{brauner2010hybrid}.
Though it is distinct from hybrid-dynamical systems, their combination has been explored, both to provide proof goal management~\cite{platzer2007towards} and to study information-flow security~\cite{bohrer2018hybrid}.
Our main deviation from traditional hybrid logic is that nominals do not name states, but positions.
This approach is inspired by and refines that of Multi-Lane Spatial Logic~\cite{hilscher2011abstract} and its extensions~\cite{schwammberger2018abstract,schwammberger2021extending}, a hybrid spatial logic with more limited temporal features.

Spatiotemporal logics (e.g.,~\cite{FDM2023,FDM2024,bennett2002multi,kontchakov2007spatial,cheng2021dynamic}) combine both spatial and temporal reasoning, each of which come in many varieties, leading to many potential combinations.
Topological approaches to space in applications, such as RCC8 reasoning~\cite{randell1992spatial}, are common but computationally complex~\cite{bennett2002multi}.
Advanced temporal logics like Signal Temporal Logic have been extended with spatial reasoning to provide practical formal methods like falsification \cite{li2020stsl,li2021runtime}.
We take inspiration from spatiotemporal logic on closure spaces~\cite{ciancia2015experimental,ciancia2017model} which, like us, explores practical model-checking algorithms, but differs in its operators because we are interested in grid-based vehicle motion, while it is interested in connectivity reasoning.

\paragraph{Verification of CPS.}
Verification of CPS is pursued both with logic and automata models such as hybrid automata~\cite{alur1991hybrid,fan2016automatic}.
In the automotive domain, automata have been successfully applied to runtime verification through model-checking~\cite{althoff2014online,pek2020using}.
For planning, temporal logics are widely-used~\cite{kress2009temporal,kress2018synthesis,smith2011optimal,plaku2015motion}.
The uncertainty of autonomous vehicles makes robustness~\cite{yu2022formally,schwarting2018planning,sun2019formal,lygeros2002verified,amini2020learning,wang2025robustness} an important topic for control especially but planning as well; discretization often induces conservative over-approximation, which indirectly promotes robustness.
Verification of control algorithms has seen success through program logics for hybrid systems, such as differential dynamic logic~\cite{platzer2018logical} and Hybrid Hoare Logic\cite{wang2015improved}.
The more recent \emph{differential Floyd-Hoare logic}~\cite{hasuo2022goal,eberhart2023formal} is a descendant of differential dynamic logic optimized for automotive safety in the \emph{responsibility-sensitive safety framework}~\cite{shalev2017formal}.
These program logics could potentially be combined with HSTL to verify that controllers safely implement the resulting plans.

\section{Hybrid Spatiotemporal Logic for Automotive Safety}

We introduce the syntax and semantics of \emph{Hybrid Spatiotemporal Logic for Automotive Safety} (\emph{HSTL}) and present non-trivial validities and non-validities.

\subsection{Syntax}
\label{sec:syntax}
The language of HSTL contains temporal operations (which control the flow of time), spatial operations (which locally move the viewpoint in space), and hybrid operations (which allow us to name and jump between different points in space).
\begin{definition}[Syntax]\rm
	Given a set $\AP$ of atomic propositions and a set $\NP$ of nominals, the \emph{formulas} of the logic are defined as follows:
	\begin{align*}
    \varphi ::= \top \mid &  \; \mathit{prop} \mid v \mid \neg \varphi \mid \varphi_1\land \varphi_2 \mid \opX \varphi \mid \varphi_1 \opU \varphi_2 \mid \\& \opF \varphi \mid \opB \varphi \mid \opL \varphi \mid \opR \varphi \mid \opAt{v} \varphi \mid \opBind{v} \varphi
	\end{align*}
	with $\mathit{prop} \in \AP$ atomic proposition and $v \in \NP$ a nominal.
\end{definition}
The \emph{temporal operators} $\opX$ and $\opU$ for `next' and `until', as usual, representing moving along a (finite) trace.
The \emph{spatial modalities} $\opF$, $\opB$, $\opL$, $\opR$ represent spatial movement in our representation of physical space (a grid-graph, which is defined in Section~\ref{sec:semantics}). 
The \emph{hybrid operators} manipulate nominals $v$.
In contrast to traditional hybrid logics (cf.~\cite{brauner2010hybrid}), our nominals $v$ name the locations of vehicles within grid-graph $G$, rather than possible worlds.
This leads to different design decisions: for example, the semantics of nominals $v$ vary as a function of time.
We use the binder $\opBind{v}$ to store the current position in a fresh $v$ for future use.
By convention, the nominal $\SV$ means \emph{subject vehicle} (``our vehicle'') and the nominal $\POV$ means some \emph{point-of-view vehicle} (``other vehicle'').

\subsection{Semantics}
\label{sec:semantics}
The semantics of HSTL give a formal specification of the meaning of each formula, which model-checking algorithms must faithfully implement.
The HSTL semantics extend traditional trace-based semantics for temporal logic with an argument that tracks our \emph{viewpoint} within some mathematical structure (cf.\ tracking the current world in traditional hybrid logic).
Changing the viewpoint does not move any vehicle.
The mathematical structures in which movement occurs are called \emph{grid-graphs}.
These are directed graphs in which each vertex corresponds to a cell of a grid, and each vertex has a directed edge to its four horizontal and vertical neighbors.

\begin{definition}[Grid-graph]\rm
A \emph{grid-graph} is a pair $(Pos, E)$ where $\Pos = \{p_{i,j} \mid 1 \leq i \leq m, 1 \leq j \leq n\}$ is a set of elements indexed by coordinates on an $m \times n$ grid, and $E$ is a binary relation on $\Pos$ such that, for all $1 \le i \le m$ and $1 \le j \le n$, the pairs $(p_{i,j}, p_{i,j+1})$, $(p_{i,j}, p_{i,j-1})$, $(p_{i,j}, p_{i+1,j})$, and $(p_{i,j}, p_{i-1,j})$, whenever defined, are elements of $E$.
\vspace{-1em}
\begin{figure}
    \centering
    \begin{tikzcd}
	\bullet & \bullet & \bullet & \bullet \\
	\bullet & \bullet & \bullet & \bullet \\
	\bullet & \bullet & \bullet & \bullet
	\arrow[tail reversed, from=1-1, to=1-2]
	\arrow[tail reversed, from=1-2, to=1-3]
	\arrow[tail reversed, from=1-2, to=2-2]
	\arrow[tail reversed, from=1-4, to=1-3]
	\arrow[tail reversed, from=2-1, to=1-1]
	\arrow[tail reversed, from=2-2, to=2-1]
	\arrow[tail reversed, from=2-2, to=3-2]
	\arrow[tail reversed, from=2-3, to=1-3]
	\arrow[tail reversed, from=2-3, to=2-2]
	\arrow[tail reversed, from=2-3, to=2-4]
	\arrow[tail reversed, from=2-4, to=1-4]
	\arrow[tail reversed, from=3-1, to=2-1]
	\arrow[tail reversed, from=3-1, to=3-2]
	\arrow[tail reversed, from=3-2, to=3-3]
	\arrow[tail reversed, from=3-3, to=2-3]
	\arrow[tail reversed, from=3-3, to=3-4]
	\arrow[tail reversed, from=3-4, to=2-4]
\end{tikzcd}

    \caption{A $3\times 4$ grid-graph}
    \label{fig:placeholder}
\end{figure}
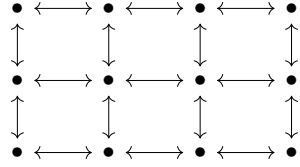
\end{definition}
\vspace{-2em}
The definition of a grid graph makes the underlying structure and adjacency relations entirely explicit. 
The formulation is intended to suggest a natural generalization to scene graphs~\cite{chang2021comprehensive}.
In scene graphs, vertices and edges capture objects in the environment and relationships which hold between them.

We support temporal reasoning via traces: sequences in which each element is associated with a point in time.
Each element of the sequence is a pair of functions assigning, at that time point, propositional variables and nominals to cells in the grid-graph.
\begin{definition}[States \& traces]\rm
	Given a grid-graph $G = (\Pos, E),$ a set $\AP$ of atomic propositions, a set $\NP$ of nominals, and a natural number $n \in \mathbb{N},$ a \emph{state} $t = (x,y)$ is a pair of valuation maps, where $x : \AP \to 2^{\Pos}$ maps each proposition to a subset of points in $G$ and $y : \NP \to \Pos$ maps each nominal to a unique point in $G$. 
    A \emph{trace} $\vec{t} = \big(t_0, \ldots, t_n\big)$ of length $n+1$ consists of a finite sequence of states.

\end{definition}
Let $\vec{t}^k \defeq \big((x_k,y_k), (x_{k+1},y_{k+1}),\dots, (x_n,y_n)\big )$ denote the suffix of $\vec{t}$ starting at the $k$-th coordinate;
and let $\subst{\vec{t}}{v}{p}$ be the element-wise substitution operation on the nominal component of the states, i.e.\ $$\subst{\vec{t}}{v}{p} \defeq \big((x_0,\subst{y_0}{v}{p}), \dots, (x_n,\subst{y_n}{v}{p})\big).$$ 
Putting together grid-graphs and traces, we obtain our semantics for HSTL.
\begin{definition}[Semantics]\rm
    Given a grid-graph $G=(\Pos,E)$, a trace $\vec{t} = \big((x_0,y_0),\dots, (x_n,y_n)\big)$ of length $n+1$ and a point $p_{i,j} \in \Pos$, the satisfaction relation $\models$ is defined as follows:
    \begin{itemize}
        \item $G,\vec{t}, p_{i,j} \models \top$
        \item $G,\vec{t}, p_{i,j} \models a$ iff $p_{i,j} \in x_0(a), \text{ for }a \in \AP$
        \item $G,\vec{t}, p_{i,j} \models v$ iff $p_{i,j} = y_0(v), \text{ for }v \in \NP$
        \item $G,\vec{t}, p_{i,j} \models \neg \varphi$ iff $G,\vec{t}, p_{i,j} \not \models \varphi$
        \item $G,\vec{t}, p_{i,j} \models \varphi_1 \land \varphi_2$ iff $G,\vec{t}, p_{i,j} \models \varphi_1$ and $G,\vec{t}, p_{i,j} \models \varphi_2$
        \end{itemize}
        Temporal modalities
        \begin{itemize}
          \item   $G,\vec{t}, p_{i,j} \models \opX \varphi$ iff  $\vec{t}^1$ is well-defined and $G,\vec{t}^1, p_{i,j} \models \varphi$ 
          \item   $G,\vec{t}, p_{i,j} \models \varphi_1 \opU \varphi_2$ iff $\exists k \in \{0,\dots,n\}$ s.t. \\  $\vec{t}^k$ is well-defined, $G,\vec{t}^k, p_{i,j} \models \varphi_2,$  and  $G,\vec{t}^l, p_{i,j} \models \varphi_1,  \text{ for all } \;  l <k.$
    \end{itemize}
    Spatial modalities
    \begin{itemize}
        \item  $G,\vec{t}, p_{i,j} \models \opF \varphi$ iff $p_{i+1,j}\in \Pos$ and $G,\vec{t}, p_{i+1,j} \models \varphi$
         \item  $G,\vec{t}, p_{i,j} \models \opB \varphi$ iff $p_{i-1,j}\in \Pos$ and $G,\vec{t}, p_{i-1,j} \models \varphi$
         \item  $G,\vec{t}, p_{i,j} \models \opL \varphi$ iff $p_{i,j-1}\in \Pos$ and $G,\vec{t}, p_{i,j-1} \models \varphi$
          \item  $G,\vec{t}, p_{i,j} \models \opR \varphi$ iff $p_{i,j+1}\in \Pos$ and $G,\vec{t}, p_{i,j+1} \models \varphi$
          \end{itemize}
          Hybrid modalities
          \begin{itemize}
        \item $G,\vec{t}, p_{i,j} \models \opAt{v} \varphi$ iff $G,\vec{t}, y_0(v) \models \varphi$
        \item $G,\vec{t}, p_{i,j} \models \opBind{v} \varphi$ iff $G,\subst{\vec{t}}{v}{p_{i,j}}, p_{i,j} \models \varphi$
    \end{itemize}
We write $\models \varphi$ to denote the validity of a formula $\phi$, i.e.\ $G,\vec{t},p_{i,j}\models \varphi$ for all grid-graphs $G$, traces $\vec{t}$ and points $p_{i,j}$.
We write $\not\models\varphi$ for $\varphi$ that is not valid.
\end{definition}

\paragraph*{Derived operators.} 
Let  $D \in \{\opL,\opR,\opF,\opB\}$ be any direction.
We define $\langle D\rangle^n \psi\defeq \bigvee_{i=1}^n D^i (\psi)$ and $[ D ]^n \psi\defeq\bigwedge_{i=1}^n (D^i(\top) \to D^i(\psi ))$, where $D^i$ denotes $i$ copies of nested $D$ modalities. 
These respectively express that $\psi$ is satisfied in some or every of the (extant) cells in direction $D$ within distance $n$.
When discussing a fixed grid-graph, we write $\langle D \rangle$ and $[ D ]$ to mean that the omitted superscript $n$ is the number of rows in the grid-graph when $D\in \{ \opF,\opB\}$ or the number of columns when $D\in \{\opL,\opR\}$.

Global and future modalities $\opBox{}\varphi$ and $\opDiamond{}\varphi$ respectively mean $\varphi$ is true at all or some future (or current) times. Definitions $\opDiamond{} \varphi \defeq \top \opU \varphi$ and $\opBox{} \varphi \defeq \neg \opDiamond{} \neg \varphi$ are standard.
We define a ``weak next'' modality $\opWX \varphi \defeq (\opX \top) \to (\opX \varphi)$, which is like $\opX$ but is true during the last timestep.
Note, $\opWX{}$ is the dual of $\opX$. 

\begin{example}[Temporal modalities]
	Let $\safe\in \AP$, $\POV\in \NP$ and consider 
	$ (\safe) \opU (\POV).$
	Evaluated at $\vec{t},p_{i,j}$, it expresses that the point $p_{i,j}$ is safe, until at some point in time $\POV$ occupies $p_{i,j}$.
\end{example}
\begin{example}[Hybrid modalities] Let $\SV,v \in \NP$ and consider 
$\opAt{\SV}  \opBind{v}  \bigcirc \opAt{\SV} v$. 
	Evaluated at $\vec{t},p_{i,j}$ it expresses that the next time step exists, and that $\SV$ remains in the same position at the next time step. 
\end{example}

\subsection{Nontrivial validities}
\label{subsec:valities}
We explore the following valid formulas of HSTL to validate that HSTL semantics align with intuition as desired.
Since our frames are rectangular $m\times n$ grids with temporal traces, the interaction between spatial modalities, temporal operators, and hybrid operators yields a number of nontrivial validities.

From a logical perspective, these validities serve as a sanity check that the operators behave according to the intended intuitions. 
Establishing validities ensures that the formal semantics aligns with the conceptual reading of the operators.
\begin{enumerate}
	\item Commutation of orthogonal moves: horizontal and vertical moves commute on every grid point.
Thus, for all formulas $\varphi$,
$$
\opF\opR\varphi\leftrightarrow\opR\opF\varphi,
\hspace{1em}
\opF\opL\varphi\leftrightarrow\opL\opF\varphi,
$$
$$
\opB\opR\varphi\leftrightarrow\opR\opB\varphi,
\hspace{1em}
\opB\opL\varphi\leftrightarrow\opL\opB\varphi.
$$
These express that moving ``up and then right'' reaches the same point as ``right and then up'', and similarly for all orthogonal pairs.
\item Loops on the grid: traversing the four sides of a unit square and returning to the start yields the original truth value, whenever the entire cycle exists:
$$
\opF\opR\opB\opL\varphi\rightarrow\varphi,
\hspace{1em}
\opR\opF\opL\opB\varphi\rightarrow\varphi,
$$
$$
\opF\opL\opB\opR\varphi\rightarrow\varphi,
\hspace{1em}
\opB\opR\opF\opL\varphi\rightarrow\varphi.
$$
\item Space and time interaction:
spatial modalities modify the grid coordinate while temporal modalities modify the trace index. 
Hence they commute.
\begin{enumerate}
\item Commutation of temporal with spatial moves: for all $\varphi$,
$$
\opX D\varphi\leftrightarrow D\opX\varphi, 
\hspace{1em}
\lozenge D\varphi\leftrightarrow D \lozenge \varphi, 
\hspace{1em} 
\square D\varphi\leftrightarrow D \square \varphi. 
$$
\item  Distribution of spatial moves along until: for all $\varphi$ and $\psi$,
$$
D(\varphi\opU\psi)\leftrightarrow(D\varphi)\opU(D\psi).
$$
\end{enumerate}

\item Hybrid validities: 
The semantics of the hybrid modalities interacts nontrivially with space and time.
The following standard~\cite{brauner2010hybrid} hybrid validities hold for $a,b,c \in \NP$:
$$ \opBind{a} a, 
\hspace{1em} \opAt{a} a, \hspace{1em} \opAt{a} b \rightarrow \opAt{b} a, \hspace{1em} \opAt{a} b \wedge \opAt{a} \varphi \rightarrow \opAt{b} \varphi $$

As for binders, we have the following list of  validities for all $v,\varphi,\psi$:
$$
\opBind{v}\varphi\leftrightarrow\opBind{v}\opAt{v}\varphi, 
\hspace{1em} 
\opBind{v}\opX\varphi\leftrightarrow\opX\opBind{v}\varphi, 
\hspace{1em} 
\opBind{v}\lozenge\varphi\leftrightarrow\lozenge\opBind{v}\varphi, 
$$
$$
\opBind{v}\square\varphi\leftrightarrow\square\opBind{v}\varphi, \hspace{1em}  \opBind{v}(\varphi\opU\psi)\leftrightarrow(\opBind{v}\varphi)\opU(\opBind{v}\psi).
$$
\end{enumerate}
\subsection{Nontrivial non-validities}
\label{subsec:non-valities}
In contrast to the validities above, the following formulas
fail on some model in the class of HSTL models. 
Each failure witnesses a different characteristic of the interaction between spatial, temporal and hybrid operations. 
Exploring these non-validities provides an awareness of where formal semantics may differ from intuition and helps establish the limits of HSTL expressiveness.
\begin{enumerate}
	\item The hybrid operator $\opAt{}$ is time sensitive, i.e.\ 
$
\not\models\opAt{v}\varphi\rightarrow\opX\opAt{v}\varphi.
$
	\item The hybrid operator $\opAt{}$ and $\lozenge$ do not commute, i.e.\ 
	$
\not\models\opAt{v}\lozenge\varphi\leftrightarrow\lozenge\opAt{v}\varphi.$
	\item Hybrid operators and $D$ modalities do not commute, i.e.\
	$\not\models\opAt{v}D\varphi\leftrightarrow D\opAt{v}\varphi$ and $\not\models\opBind{v}D\varphi\leftrightarrow D\opBind{v}\varphi.$
\end{enumerate}

\section{Modeling}
The simple operators of HSTL combine to express rich specifications of how vehicle motion evolves over time in a dynamic road environment.
We demonstrate how the HSTL operators work together to model an automotive scenario. 
Interactions are complex, but most practical models build on a few fixed idioms:
\begin{definition}[Modeling idioms]\rm
\label{def:model-idioms}
\begin{enumerate} 
    \item A \emph{global state assumption} assumes a formula at all times for all initial viewpoints $p_{i,j}$.
    Specifically we support  $\square \opAt{v} \varphi$ where $\varphi$ contains no temporal modalities other than $\square$.
    Clearly $\square$ captures all times, but $\opAt{v}$ capturing all initial viewpoints is more subtle: 
    $\opAt{v}$ overwrites the viewpoint with  $v's$ position, so its truth value is viewpoint-independent.
    This detail simplifies model-checker soundness in certain edge cases.
    \item A \emph{static car assumption} consists of a single nominal. 
    The assumption $v \in \NP$ specifies that $v$ is stationary for the whole trace, i.e., $y_k(v) = p_{i,j}$ for fixed values of $i,j$ and any time $k$ in the trace.
    It is expressed by the formula  $\opAt{v} \opBind{v'} \square @_{v} v'.$
    We say that $v$ is \emph{static}.
    \item A \emph{relative motion assumption} consists of a pair of nominals $v_1, v_2$, and a sequence of spatial modalities $\vec{D}$. 
    The assumption $(v_1, v_2, \vec{D})$ specifies that relative to $v_1$, $v_2$ is always in the position obtained by moving according to $\vec{D}$.
    It is expressed by the formula  $\square @_{v_1} \mathop{\vec{D}} v_2.$
    We say that $v_1$ is a \emph{dependee} and that $v_2$ is \emph{dependent} on $v_1$.
    \item A \emph{fixed motion assumption} consists of a nominal $v$ and a set of sequences of spatial modalities $M$.
    The assumption $(v, M)$ specifies that after one time step, $v$'s original position relates to its new position by one of the directions in $M$.
    It is expressed by  $\square \opAt{v} \opBind{v'} \opWX \opAt{v} \bigvee_{\vec{D} \in M} \mathop{\vec{D}} v'.$
    We say that $v$ has \emph{fixed motion} from $M$. 
\end{enumerate} 
\end{definition}

The following examples, which use a grid-graph of size $5 \times 3,$ build upon the idioms.
Not all idioms appear directly in these examples, but all are supported in our model-checker (Section \ref{sec:model-checking}).
In Figures~\ref{fig:safefollow} and \ref{fig:evading}, 
the red car denotes $\SV$, the blue car denotes $\POV$, and the proposition
$h$ (road hazard) is illustrated by barricades. 
The yellow arrows are not part of the model; they merely
indicate, for convenience, how the location of a nominal changes at the
next time step.
\begin{example}[Safe at all times] The \emph{global state} formula
$   \varphi_1=\square \opAt{\SV} \neg\POV$
expresses that, at all points in time (i.e.\ at every state in $\vec{t}$), $\SV$ always occupies a different cell from $\POV$, and is therefore safe.
\end{example}
\begin{example}[Safe follow]
\label{ex:safe-follow}
The formula 
    	\begin{align*}
		\varphi_2 := \opAt{\SV}&\neg \opB\top\wedge  \square \big( \opAt{\POV}\opBind{v'}\opWX \opAt{\POV}(v' \vee \opB(v'))\big) \\&\wedge \square  \Big(\opAt{\SV}\opBind{v}\opWX \opAt{\SV}\big( (\neg \POV \wedge \opB (v)) \vee  (v \wedge \opF (\POV))\big) \Big)
	\end{align*} 
expresses that $\SV$ follows $\POV$ safely. 
The first conjunct specifies that $\SV$ starts in the bottom row of the grid-graph, while the second conjunct, a \emph{fixed motion assumption}, requires that at each time step $\POV$ either moves one cell forward or remains in place. 
To match this behavior, the third conjunct enforces that $\SV$ advances whenever the cell directly ahead is not occupied by $\POV$, and otherwise stays in place.
A model satisfying $\varphi_2$ is illustrated in Figure~\ref{fig:safefollow}. 
\begin{figure}[H]
  \centering
  \includegraphics[width=0.18\textwidth]{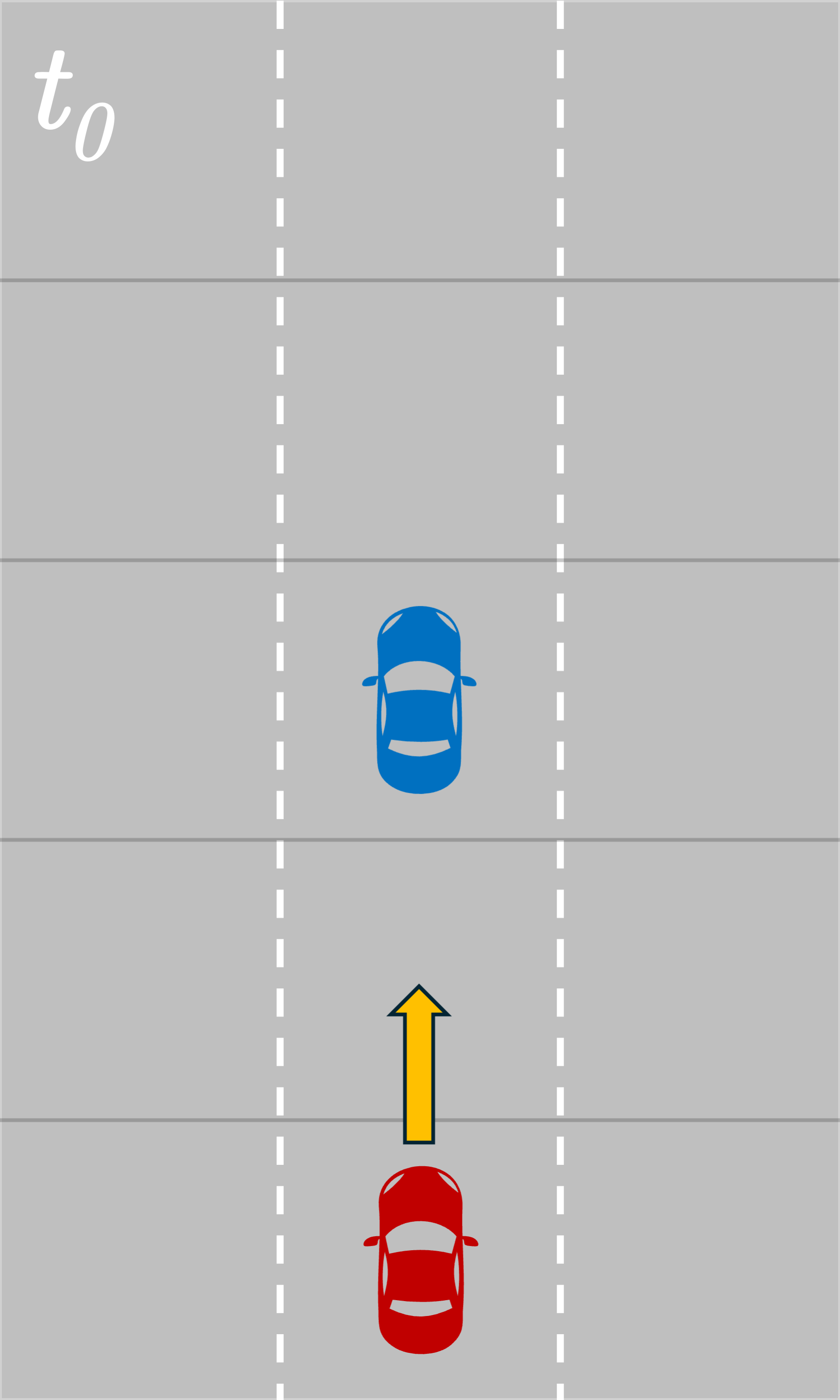}
  \hspace{1.5em}
  \includegraphics[width=0.18\textwidth]{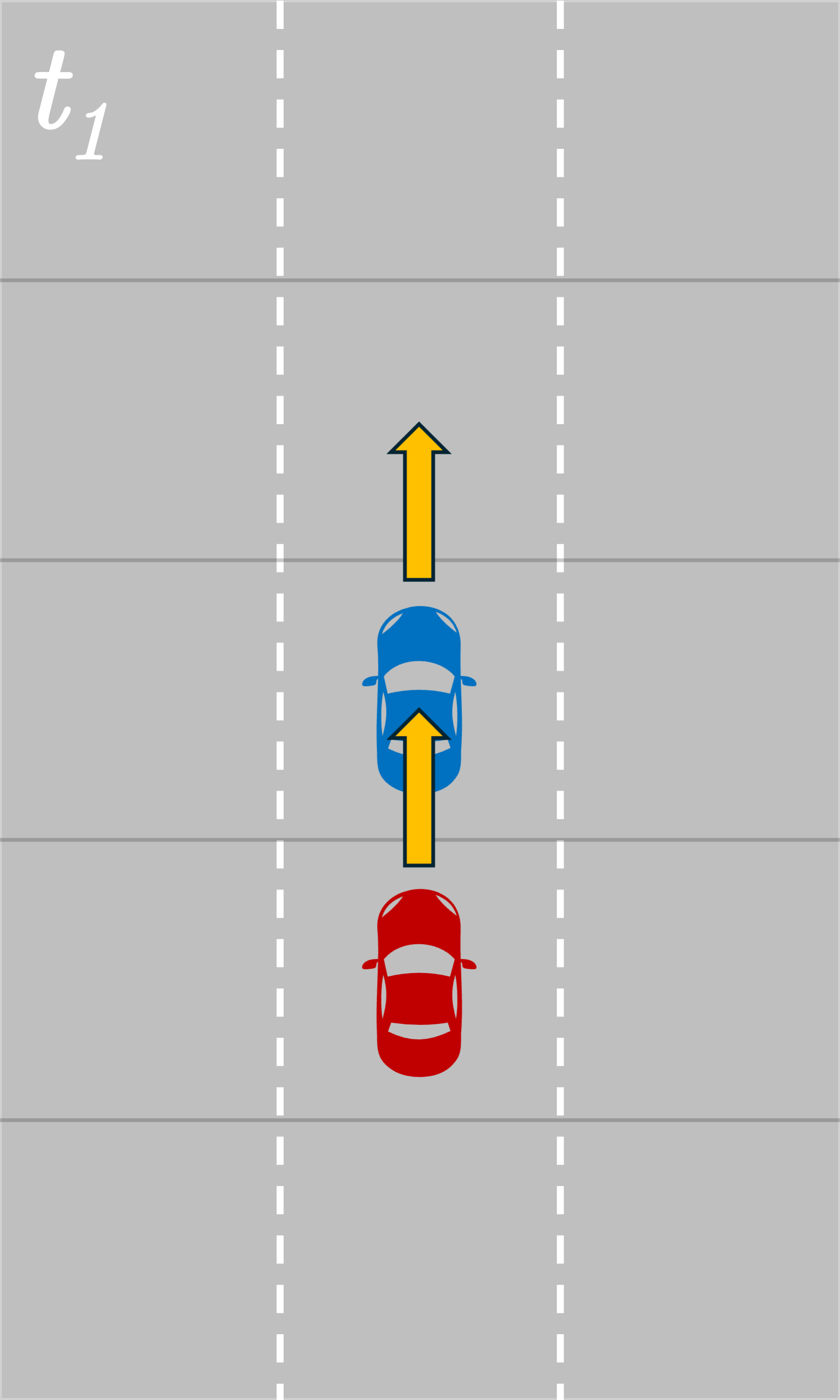}
  \hspace{1.5em}
  \includegraphics[width=0.18\textwidth]{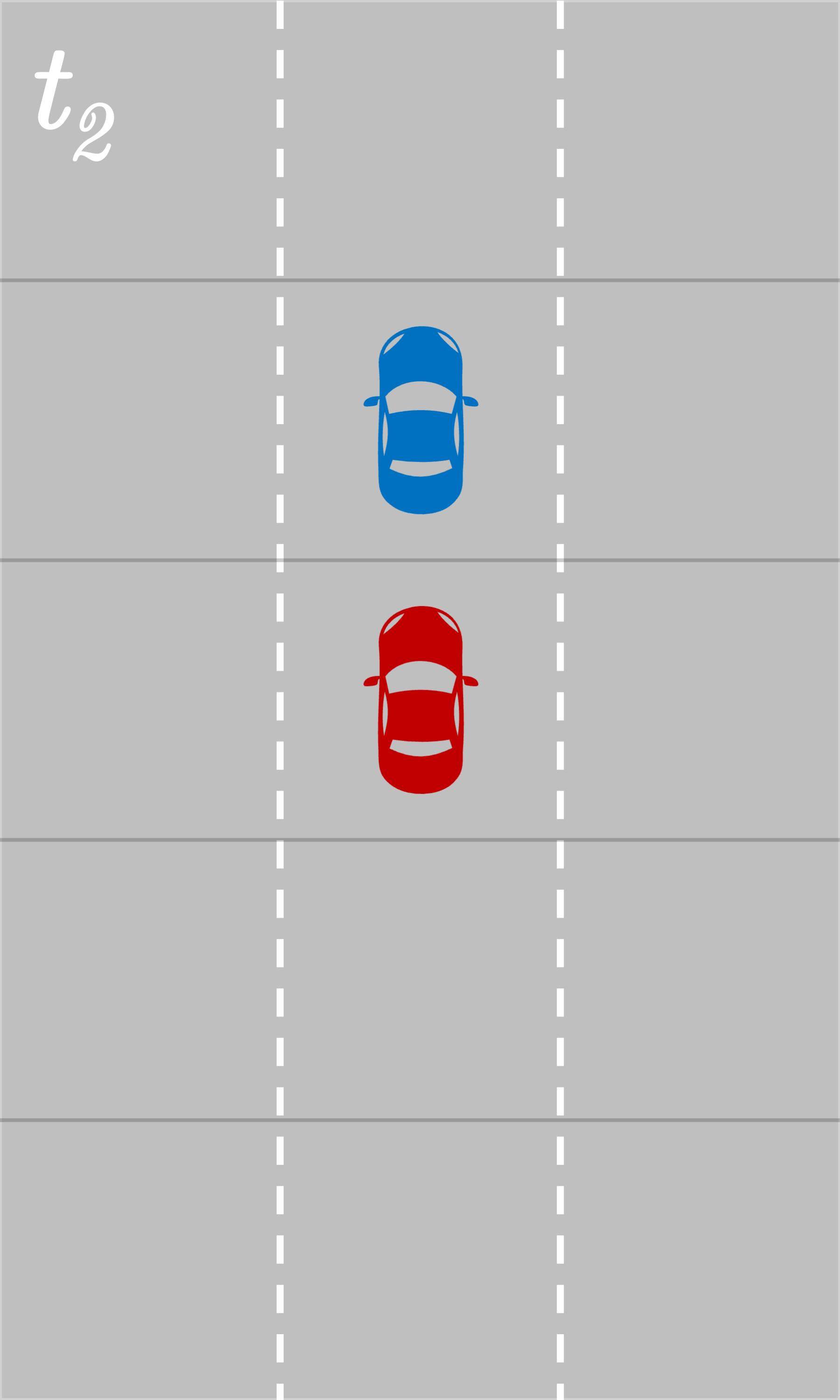}
  \caption{A model satisfying the formula $\varphi_2$}
  \label{fig:safefollow}
\end{figure}
\end{example}
\begin{example}[Evading static hazard on an active road]
\label{ex:static-hazard}
The formula 
 \begin{align*}
      \varphi_3:=   \opAt{\SV} \Big( &\big(\opR(\POV)\wedge \langle\opF\rangle\square h \big) \wedge \\ &\big(\opAt{SV}\opBind{v}\opX \opAt{\SV}(\opB(v)\wedge\square\neg h) \big)\opU \phantom{a} \\ & \big(\opAt{SV} \opBind{v} \opX \opAt{\SV} (\opL(v) \wedge \langle\opF\rangle(\POV)\wedge [\opF]\square\neg h)\big)\Big)
     \end{align*}
expresses the following behavior: 
$\POV$ is initially to the right of $\SV$ and there is a road hazard ahead whose position does not change over time.
In addition, $\SV$ must eventually move to a position in the neighboring right-hand lane such that $\POV$ is in front of it and no road hazard lies ahead. 
Until that moment, $\SV$ remains in the lane of its initial position while avoiding collisions with road hazards. 
A model satisfying $\varphi_3$ is illustrated in Figure~\ref{fig:evading}. 
   \begin{figure}[H]
  \centering
  \includegraphics[width=0.18\textwidth]{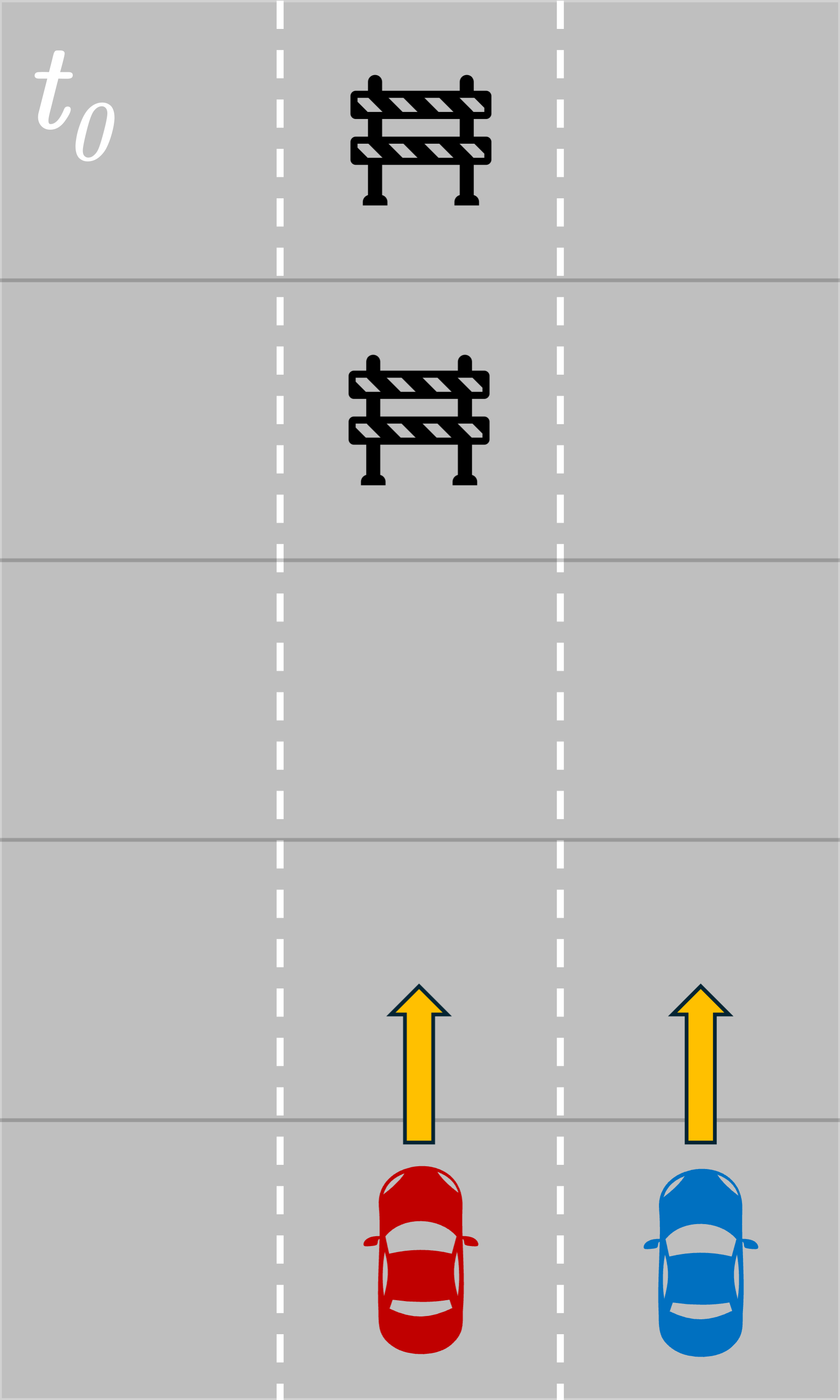}
  \hspace{1.5em}
  \includegraphics[width=0.18\textwidth]{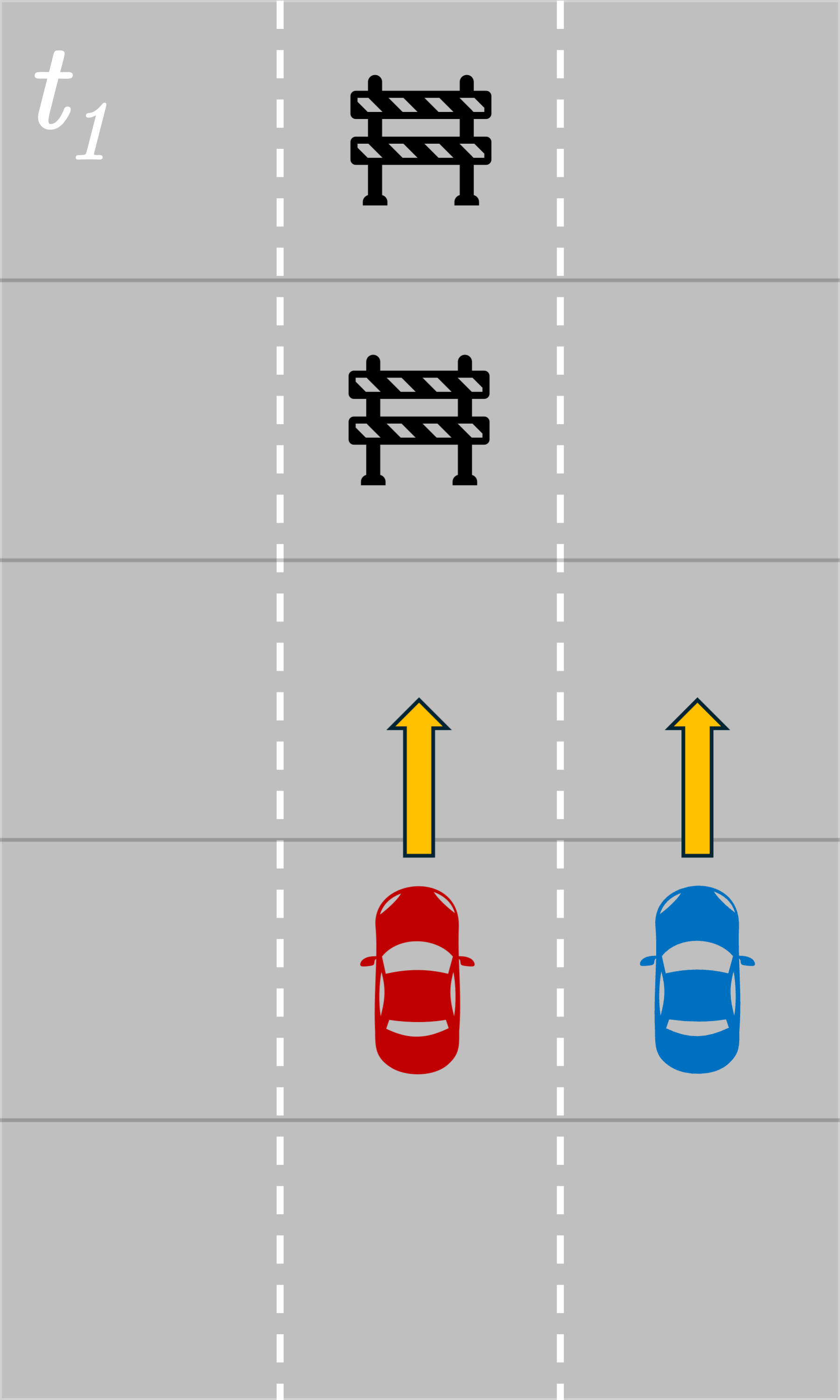}
  \hspace{1.5em}
  \includegraphics[width=0.18\textwidth]{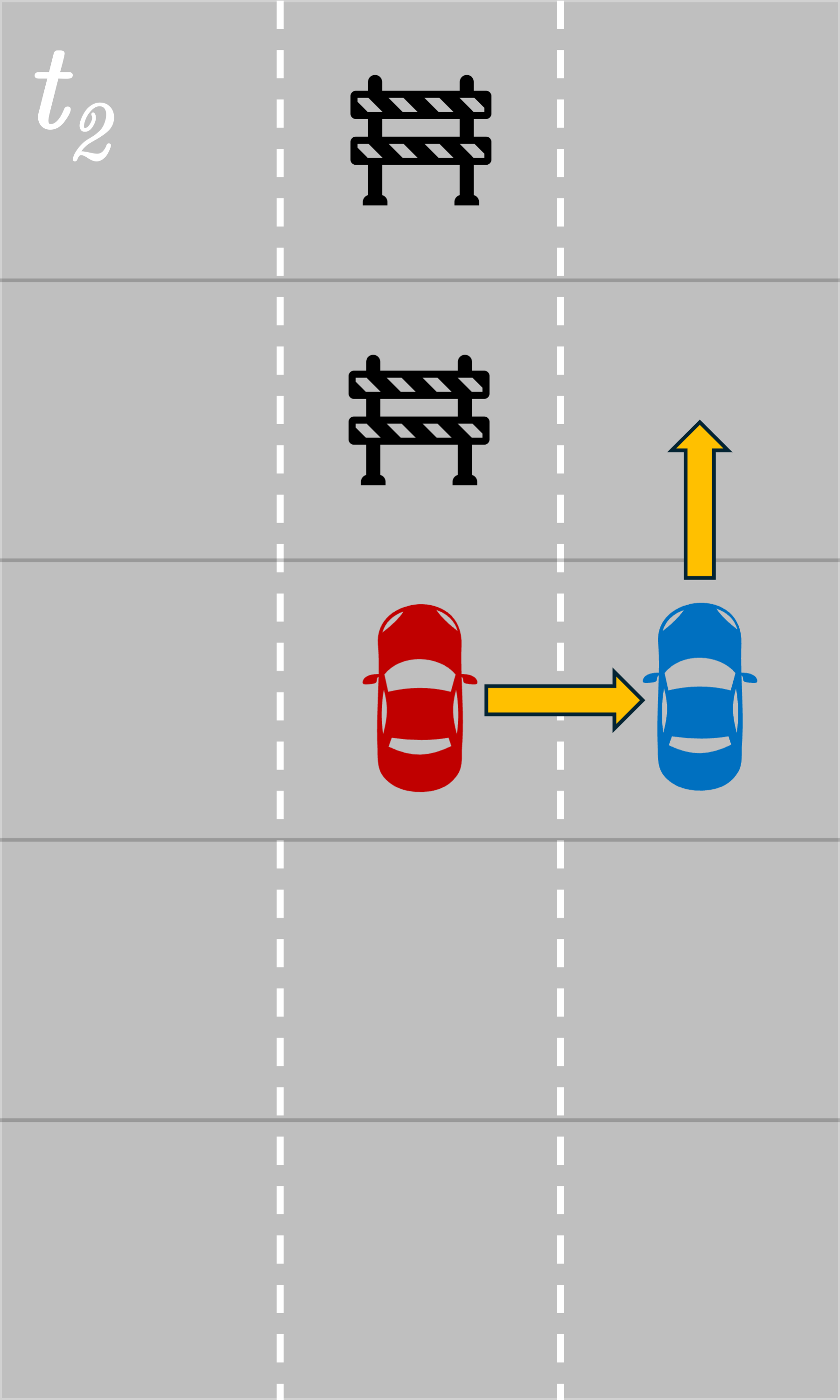}
   \hspace{1.5em}
  \includegraphics[width=0.18\textwidth]{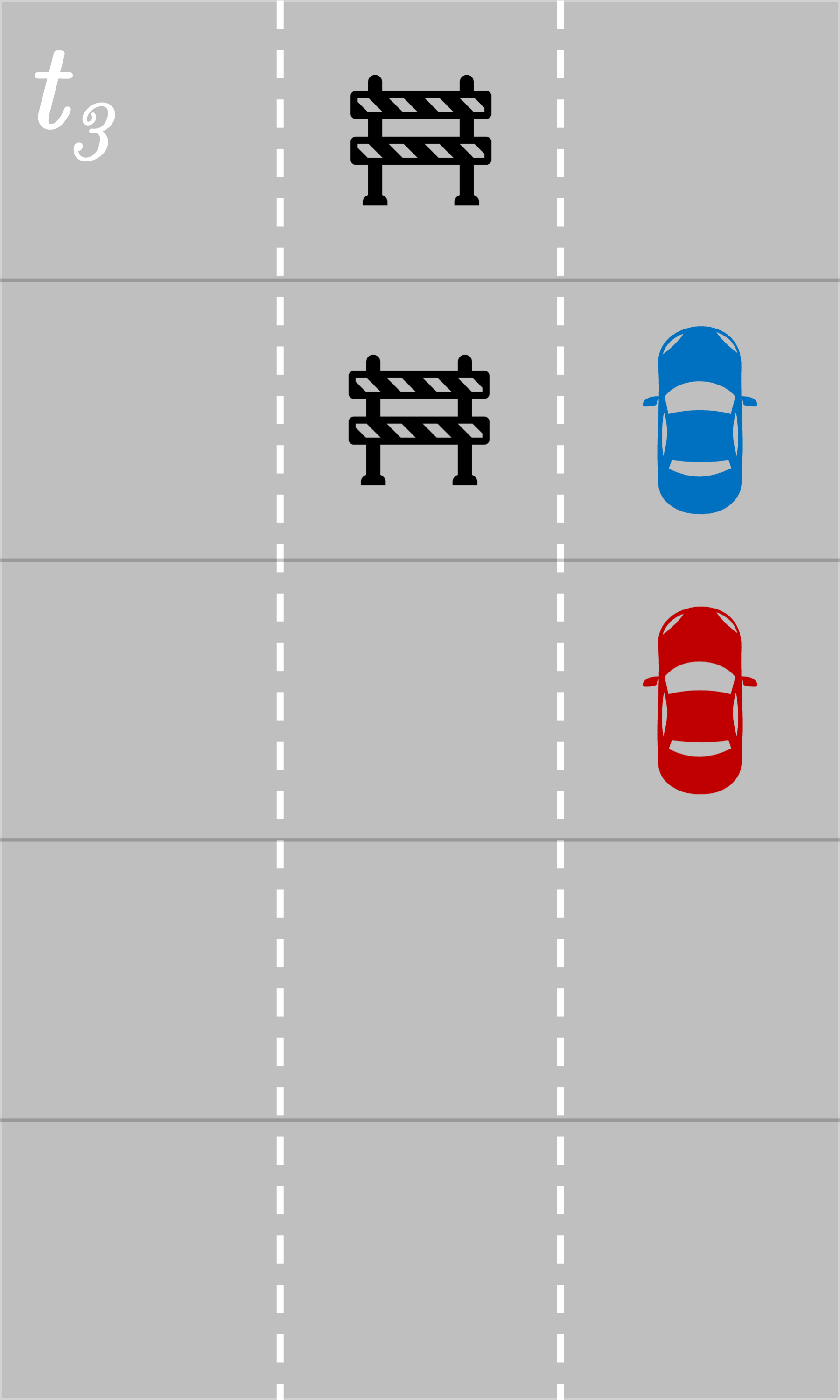}
  \caption{A model satisfying the formula $\varphi_3$}
  \label{fig:evading}
\end{figure}
Note that Figure \ref{fig:evading} is not the only model of $\varphi_3$.
Line 2 allows $\POV$ to be anywhere, and the value of $h$ is arbitrary everywhere except the current position and its front region.
This under-specification reduces realism \emph{and} makes the search space intractably large.
Thus $\varphi_3$ showcases our motivations for nominals and motion assumptions: increased realism and reduced search space.
\end{example}

\section{Model-Checking}
\label{sec:model-checking}
Model-checking is a broad set of verification techniques that systematically explore system states to verify correctness in each state.
We develop model-checking algorithms for HSTL, which systematically explore possible traces and identify those which satisfy a specification.
This algorithm is appropriate for offline usage, e.g., for test case generation.
Additionally, we give correctness and complexity guarantees (Theorems~\ref{thm:correctness and runtime of evaluator} and \ref{thm:motion algorithm number of traces generated}).

\subsection{Formula evaluation}
The formula evaluation algorithm takes as input a grid-graph $G$, a trace $\vec{t}$, a point $p_{i,j}$ in the grid-graph, and an HSTL formula $\varphi$, and outputs whether or not $G, \vec{t}, p_{i,j} \models \varphi$. 
Its pseudocode is given in Algorithm~\ref{alg:evaluator}. 
It adapts a modern dynamic programming algorithm from LTL~\cite{DBLP:journals/jair/FiondaG18} to avoid an exponential dependence on the length of the formula that can arise from a na\"ive recursive evaluation of the Until operator. 
\paragraph{Remark.}
In the implementation of the lookup table \texttt{memo}, distinct instances of a subformula $\varphi'$ of $\varphi$ result in distinct keys for the lookup table. 
That is, in the execution of \texttt{\textsc{Eval\_memo}}, if $\varphi'$ and $\varphi''$ are syntactically identical formulas that arise separately in the parse tree of $\varphi$, the keys $(t, p_{i,j}, \varphi')$ and $(t, p_{i,j}, \varphi'')$ are considered distinct.
This is necessary to ensure correctness of the spatial operators and the bind operator, since they alter the current position and the trace, respectively, on which the formula is evaluated. 

\begin{algorithm}
    \caption{Formula evaluator}\label{alg:evaluator}
    \begin{algorithmic}[1]
        \Function{\texttt{Eval}}{$G, \vec{t}, p_{i,j}, \varphi$}
            \State \Input Grid-graph $G = (\Pos, E)$, trace $\vec{t} = ((x_0, y_0), \dots, (x_n,y_n))$, 
            \State \phantom{\Input}point $p_{i,j}$, HSTL formula $\varphi$
            \State \Output Bool
            \State \textbf{initialize} lookup table \texttt{memo}
            \State \Return \texttt{\textsc{Eval\_memo}}($G, \vec{t}, 0, p_{i,j}, \varphi$)
        \EndFunction

        \State
        
        \Function{\texttt{\textsc{Eval\_memo}}}{$G, \vec{t}, k, p_{i,j}, \varphi$}
            \State \Input Grid-graph $G = (\Pos, E)$, trace $\vec{t} = ((x_0, y_0), \dots, (x_n,y_n))$, 
            \State \phantom{\Input}time step $k \leq n$, point $p_{i,j}$, HSTL formula $\varphi$
            \State \Output Bool
            \If{\texttt{memo}$[(k, \varphi)]$ exists} \Return{\texttt{memo}$[(k,  \varphi)]$}
            \Else
                \Match{$\varphi$}
                    \Case{$\top$} \texttt{result} $\gets$  True \EndCase
                    \Case{$a \in \AP$} \texttt{result} $\gets$  $p_{i,j} \in x_0(a)$ \EndCase
                    \Case{$v \in \NP$} \texttt{result} $\gets$  $p_{i,j} = y_0(v)$ \EndCase
                    \Case{$\neg \varphi_1$} \texttt{result} $\gets$ not \texttt{\textsc{Eval\_memo}}$(G, \vec{t}, k, p_{i,j}, \varphi_1)$ \EndCase
                    \Case{$\varphi_1 \land \varphi_2$} 
                        \State \texttt{result} $\gets$ \texttt{\textsc{Eval\_memo}}$(G, \vec{t}, k, p_{i,j}, \varphi_1)$ and \texttt{\textsc{Eval\_memo}}$(G, \vec{t}, k, p_{i,j}, \varphi_2)$
                    \EndCase
                    \Case{$\opX \varphi_1$} 
                        \State \texttt{result} $\gets$ $k < n$ and \texttt{\textsc{Eval\_memo}$(G, \vec{t}, k+1, p_{i,j}, \varphi_1)$}
                    \EndCase
                    \Case{$\varphi_1 \opU \varphi_2$} 
                        \If{\texttt{\textsc{Eval\_memo}$(G, \vec{t}, k, p_{i,j}, \varphi_2)$}} \texttt{result} $\gets$ True \label{line:evaluator:until, varphi2 true}
                        \ElsIf{$k < n$}
                            \State \texttt{result} $\gets$ \texttt{\textsc{Eval\_memo}}$(G, \vec{t}, k, p_{i,j}, \varphi_1)$ and 
                            \State \phantom{\texttt{result} $\gets$} \texttt{\textsc{Eval\_memo}}$(G, \vec{t}, k+1, p_{i,j}, \varphi)$
                        \Else{} \texttt{result} $\gets$ False
                        \EndIf
                    \EndCase
                    \Case{$\opF \varphi_1/\opB \varphi_1/\opR \varphi_1/\opL \varphi_1$}
                        \If{$p_{i+1,j}/p_{i-1,j}/p_{i,j+1}/p_{i,j-1} \in \Pos$}
                            \State $p' \gets p_{i+1,j}/p_{i-1,j}/p_{i,j+1}/p_{i,j-1}$ \label{line:evaluator:spatial update}
                            \State \texttt{result} $\gets$  \texttt{\textsc{Eval\_memo}$(G, \vec{t}, k, p', \varphi_1)$} 
                        \Else{} \texttt{result} $\gets$ False
                        \EndIf
                    \EndCase 
                    \Case{$\opAt{v} \varphi_1$} \texttt{result} $\gets$  $\texttt{\textsc{Eval\_memo}}(G, \vec{t}, k, x_0(v), \varphi_1)$ \EndCase
                    \Case{$\opBind{v} \varphi_1$} 
                        \State $\vec{t}' = ((x'_0, y'_0), \dots, (x'_n,y'_n)) \gets$ \texttt{copy}($\vec{t}$)
                        \For{$l = k, \ldots, n$}
                            \State $y_l'(v) \gets p_{i,j}$ 
                        \EndFor \label{line:evaluator:bind reassignment complete}
                        \State \texttt{result} $\gets$  $\texttt{\textsc{Eval\_memo}}(G, \vec{t}', k, p_{i,j}, \varphi_1)$
                    \EndCase
                \EndMatch
                \State \texttt{memo}$[(k, \varphi)] \gets$ \texttt{result} \label{line:evaluator:save result to memo}
                \State \Return{\texttt{result}}
            \EndIf
        \EndFunction
    \end{algorithmic}
\end{algorithm}

\begin{theorem}[Correctness and runtime of formula evaluator] \label{thm:correctness and runtime of evaluator}
    For every grid-graph $G$, trace $\vec{t}$, point $p_{i,j}$ in the grid-graph, and HSTL formula $\varphi$, \linebreak
    $\texttt{Eval}(G, \vec{t}, p_{i,j}, \varphi)$ (as defined in Algorithm~\ref{alg:evaluator}) returns True if and only if \linebreak
    $G, \vec{t}, p_{i,j} \models \varphi$.
    Furthermore, \texttt{Eval} terminates within $O(|\vec{t}|^2 \cdot |\varphi|)$ steps.
\end{theorem}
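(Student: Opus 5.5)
The plan is to prove correctness by structural induction on $\varphi$, strengthened to an invariant about the memo table, and to get the runtime by counting memo keys. I read \textsc{Eval\_memo}$(G,\vec{t},k,p,\psi)$ as evaluating $\psi$ on the suffix $\vec{t}^k$. Accordingly, the lookups written $x_0,y_0$ in Algorithm~\ref{alg:evaluator} are taken to mean $x_k,y_k$, and the $@$ case to mean $y_k(v)$.

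The main claim is that, whenever \textsc{Eval\_memo}$(G,\vec{t}',k,p,\psi)$ returns, its result equals the truth value of $G,\vec{t}'^k,p\models\psi$. Here $\psi$ is a syntactic occurrence in the parse tree of $\varphi$. The induction is on the pair (size of $\psi$, $n-k$), ordered lexicographically; this is needed because the $\opU$ case recurses on the same $\psi$ at $k+1$. The cases for $\top$, atoms, nominals, $\neg$ and $\wedge$ are immediate from the semantics. For $\opX$, the recursive call at $k+1$ is exactly the condition that $\vec{t}^{k+1}$ exists and satisfies $\varphi_1$. The spatial cases mirror the definitions of $\opF,\opB,\opL,\opR$. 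The $@_v$ case moves the viewpoint to $y_k(v)$. For $\opBind{v}$, overwriting $y_l(v)$ for $l\ge k$ produces exactly $\subst{\vec{t}^k}{v}{p}$; the earlier states are irrelevant to the suffix. For $\opU$, I will check that the recursion unfolds the fixpoint identity
\[
\varphi_1\opU\varphi_2 \leftrightarrow \varphi_2\vee(\varphi_1\wedge\opX(\varphi_1\opU\varphi_2)),
\]
together with the base case $k=n$. This identity follows directly from the existential-over-$k$ semantics.

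The delicate point, which I expect to be the main obstacle, is memo soundness. The key $(k,\psi)$ omits the viewpoint and the trace, so I must show that all calls reaching a given occurrence $\psi$ at time $k$ carry the same viewpoint and the same nominal valuation. The natural lemma is: each call on occurrence $\psi$ at time $k$ arises from a unique call on the parent occurrence. For most parents this holds because the parent's call at the corresponding time is itself unique, by induction down the parse tree, and the child's position and trace are functions of the parent's. The Remark's convention that repeated subformulas are distinct keys is exactly what makes the parent unique.

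The problematic pattern is a descendant of $\opU$ reached both from the $\opU$-call at time $k$ (at $\varphi_2$) and from the $\opU$-call at an earlier time via the recursion on the same $\opU$-occurrence. Each such call is at a distinct time $k$, so it has its own key, but the viewpoint for that $\opU$-occurrence must also be unique per time. I would verify this first and carefully, by tracking how $@$ and $\opBind{}$ above an $\opU$ depend only on the time of their own call. If uniqueness genuinely fails, for instance through an $\opX$ nested below a time-dependent $@$, the fallback is to prove the theorem for the variant keyed by $(k,p,\psi)$ together with the current bindings.

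For the runtime, I use the memo to bound the number of non-trivial calls. For each occurrence $\psi$ there are at most $|\vec{t}|$ values of $k$, so at most $|\vec{t}|\cdot|\varphi|$ bodies are executed. Every other call is a constant-time lookup, and each body issues at most two recursive calls. Each body costs $O(1)$, except that $\opBind{}$ copies and updates the trace in $O(|\vec{t}|)$. The total is therefore $O(|\vec{t}|\cdot|\varphi|\cdot|\vec{t}|)=O(|\vec{t}|^2\cdot|\varphi|)$. If trace copying is counted as the full trace size, nominal count is treated as a constant, or the cost is absorbed by sharing.
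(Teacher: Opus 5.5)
Your plan follows the paper's: induction on the occurrence and on $n-k$ with an invariant about \texttt{memo}, then number of keys times per-call cost. The step you single out is exactly where it breaks.

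\paragraph{Where the uniqueness lemma fails.} The culprit is not $\opX$: the child of an $\opX$-occurrence at time $k+1$ is reached only from that occurrence at time $k$. It is the self-call of $\opU$. That call reaches the same occurrence at time $k+1$ carrying the viewpoint it had at time $k$. Meanwhile the parent may reach it at time $k+1$ with a different viewpoint, whenever the $\opU$ sits below an $\opAt{v}$ that an enclosing $\opU$ re-enters at several times. Concretely, take $\top\opU\opAt{v}(\top\opU q)$ on a $1\times 2$ grid with the two-state trace $y_0(v)=p_{1,1}$, $y_1(v)=p_{1,2}$, $x_0(q)=\emptyset$, $x_1(q)=\{p_{1,2}\}$.
\begin{itemize}
\item The formula holds at every point, with witness time $1$.
\item \texttt{Eval} first evaluates the inner until at time $0$ from $p_{1,1}$. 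Its self-call stores $\texttt{memo}[(1,\top\opU q)]=\texttt{False}$, computed at $p_{1,1}$.
\item When the outer until reaches time $1$, the $@$ case requests that key from $p_{1,2}$. It reads \texttt{False}, and \texttt{Eval} returns \texttt{False}.
\end{itemize}
So your uniqueness lemma is false, and the theorem for Algorithm~\ref{alg:evaluator} as written is false. Reading the $@$ case literally as $y_0(v)$ instead already fails on $\opX\opAt{v}q$. The paper's proof does not catch this. It gives each parse-tree node one position $p_{\varphi'}$ and lets the child of $\opAt{v}$ inherit its parent's position, contrary to both the algorithm and the semantics. It then calls the $@$ case straightforward. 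You located the real difficulty more precisely than the paper does.

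\paragraph{What your fallback gives.} Your fallback is the right repair, and you are right that the key must contain the bindings, not only the viewpoint. The key $(k,p,\psi)$ suggested by the paper's Remark still fails on $\top\opU\opAt{v}\opBind{w}(\top\opU\opAt{u}w)$ with $y_0(v)=p_{1,1}$ and $y_0(u)=y_1(u)=y_1(v)=p_{1,2}$: the key $(1,p_{1,2},w)$ is first filled while $w$ is bound to $p_{1,1}$.

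But then your count of at most $|\vec{t}|$ keys per occurrence is gone. Per time step, an occurrence may be reached with $|\Pos|$ viewpoints and $|\Pos|^{b}$ valuations of its $b$ enclosing binders. The best bound you get is of the order $|\vec{t}|^2\cdot|\varphi|\cdot|\Pos|^{b+1}$.

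This loss is unavoidable unless $\mathrm{P}=\mathrm{PSPACE}$. Take a QBF over $z_1,\dots,z_N$ and a $1\times 2$ grid.
\begin{itemize}
\item Let $v$ alternate between the two cells.
\item Let a nominal $s$ stay on $p_{1,1}$.
\item Let a proposition $c_i$ hold everywhere exactly at times $2i-2$ and $2i-1$.
\end{itemize}
Then $\lozenge(c_i\wedge\opAt{v}\opBind{z_i}\chi)$ and $\square(c_i\to\opAt{v}\opBind{z_i}\chi)$ act as $\exists z_i\,\chi$ and $\forall z_i\,\chi$, with the literal $z_i$ rendered as $\opAt{s}z_i$, on a trace of length $2N$.

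\paragraph{What can be proved.} One option is correctness of the re-keyed evaluator with the weaker bound. The other is the stated theorem on the fragment where no $\opU$ (counting $\lozenge$ and $\square$) occurs in the scope of an $@$ that is itself in the scope of an $\opU$. There every occurrence has a single context per time step, and your induction and key count go through. The static-car, relative-motion and fixed-motion idioms lie in this fragment; $\varphi_3$ does not.
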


\subsection{Model-checkers}
The model-checkers take as input a grid-graph $G$, a formula $\varphi$, a max trace length $n$, and optionally a list of assumptions $A$, in the form of the modeling idioms (Definition~\ref{def:model-idioms}).
They output traces $\vec{t}$ of length up to $n$ which satisfy the assumptions $A$, for which there are points $p_{i,j}$ such that $G, \vec{t}, p_{i,j} \models \varphi$. 
Three different model-checking algorithms are implemented:
\begin{enumerate}
    \item The \emph{baseline} algorithm generates all possible states on $G$ (i.e., arbitrary assignments of atomic propositions and nominals), then generates all possible traces by arbitrarily assigning a state to each time step.
    Once generated, the specification is evaluated on each trace. 
    \item The \emph{optimized} algorithm allows for global state assumptions, as formalized in Definition \ref{def:model-idioms}. The algorithm begins by identifying the set of states $S$ that satisfy the given set of assumptions $A$. More precisely, it selects all states $t = (x,y)$ such that $G, t, p \models \alpha$ for every $p \in Pos$ and $\alpha \in A$. It then generates, similarly to the baseline algorithm, all possible traces composed exclusively of states in $S$. Once generated, the specification is evaluated on each trace.
    \item The \emph{motion} algorithm additionally allows motion assumptions.
    Traces are generated one step at a time, utilizing the motion assumptions to limit the number of assignments of nominals that need to be considered.
    For example, for a fixed motion assumption of the form $(v, M)$, only states in which $v$ has moved from its previous position according to $M$ need to be considered.
    Once a candidate state for the next timestep is generated, it is evaluated against the global state assumptions, and kept iff it satisfies them.
    Traces generated this way are finally checked against the entire specification. 
\end{enumerate}

The pseudocode for the motion algorithm is given in Algorithm~\ref{alg:motion algorithm}.
See Definition~\ref{def:model-idioms} for the motion assumption templates, which are central to its design.
Furthermore, we make these assumptions about the inputs:
\begin{preconditions}[motion algorithm]
    \label{pre:preconditions for motion algorithm}
    We assume that the assumptions $A$ given as input to the motion algorithm satisfy the following:
    \begin{enumerate}
        \item $A$ is partitioned as $A = \A{static} \cup \A{rel} \cup \A{fixed} \cup \A{global}$ into static car, relative motion, fixed motion, and global state assumptions, respectively.
        \item $\A{static}, \A{rel}$, and $\A{fixed}$ are \emph{consistent}, that is, contain no contradictions.
        \item No nominal is both a dependee and dependent.
        \item Any dependent nominal appears in exactly one relative motion assumption.
    \end{enumerate}
\end{preconditions}
The last two conditions ensure that the positions of all dependent nominals can be determined precisely by (independently) choosing the positions of all dependee nominals.
We also describe several helper functions at high level, omitting implementation details:
\begin{itemize}
    \item \texttt{generate\_initial\_states}$(G, \A{rel}, \A{global})$ outputs the states which satisfy $\A{rel}$ and $\A{global}$. 
    That is, all dependent cars must be in the correct locations relative to their dependees, and every global assumption is satisfied.
    \item \texttt{valid\_dependee\_positions}$(G, \A{rel}, v)$ outputs positions where dependee nominal $v$ can be placed so all nominals dependent on $v$ stay in $G$.
    \item \texttt{valid\_prop\_assignments}$(G, \A{global})$ outputs the collection of assignments of atomic propositions in a single state which satisfy $\A{global}$.
    \item \texttt{complete\_state}$(G, \A{rel}, x, \tilde{y})$ takes as input an assignment of atomic propositions $x : \AP \to 2^{\Pos}$ and an assignment of all non-dependent nominals $\tilde{y} : \NP \rightharpoonup \Pos$, and outputs the state extending $(x, \tilde{y})$ in which all dependent nominals are assigned positions according to $\A{rel}$. 
    \item \texttt{check\_global\_assumptions}$(G, \vec{t}, \A{global})$ checks that all global state assumptions are satisfied by $\vec{t}$.
\end{itemize}

\begin{algorithm}
    \caption{Model-checker: motion algorithm} \label{alg:motion algorithm}
    \begin{algorithmic}[1]
        \Function{\texttt{Sat\_traces}}{$G, A, \varphi, n$}
            \State \Input Grid-graph $G = (Pos, E)$, assumptions $A$ satisfying preconditions~\ref{pre:preconditions for motion algorithm}, 
            \State \phantom{\Input} specification $\varphi$, max trace length $n$
            \State \Output Stream of pairs $(\vec{t}, \texttt{sat\_points}_{\vec{t}})$, where $\vec{t}$ is a trace and 
            \State \phantom{\Output} $\texttt{sat\_points}_{\vec{t}}$ is a subset of $Pos$
            \For{$\vec{t} \in \texttt{\textsc{Generate\_traces}}(G, A, n)$}
                \State \texttt{sat\_points} $\gets \{p \in \Pos \mid \texttt{\textsc{Eval}}(G, \vec{t}, p, \varphi) = \mathrm{True}\}$ 
                \If{\texttt{sat\_points} is not empty} \textbf{yield} $(\vec{t}, \texttt{sat\_points})$
                \EndIf
            \EndFor
        \EndFunction
        \State
        \Function{\texttt{Generate\_traces}}{$G, A, n$}
            \State \Input Grid-graph $G = (Pos, E)$, assumptions $A$, max trace length $n$
            \State \Output Stream of traces satisfying $A$
            \For{$\texttt{initial\_state} \in \texttt{generate\_initial\_states}(G, \A{rel}, \A{global})$}
                \State \textbf{yield} from \texttt{\textsc{Extend\_trace}}$(G, A, 1, n, \texttt{initial\_state}, [\texttt{initial\_state}])$ \label{line:motion alg:yield length 1 trace}
            \EndFor
        \EndFunction
        \State
        \Function{\texttt{Extend\_trace}}{$G, A, k, n, t, \vec{t}$}
            \State \Input Grid-graph $G = (Pos, E)$, assumptions $A$, current length $k$, 
            \State \phantom{\Input} max trace length $n$, previous state $t = (x,y)$, current trace $\vec{t}$ \qquad\quad\qquad(Note the font difference between $t$ and $\vec{t}$) 
            \State \Output Stream of traces extending $\vec{t}$ and satisfying $A$
            \State \textbf{yield} $\vec{t}$ \label{line:motion alg:trace generated}
            \If{$k = n$} \Return{}
            \Else{}
                \For{$v \in \NP$} \label{line:motion alg:nominal choice computation loop}
                    \If{$v$ is static} 
                        \State $C(v) \gets \{y(v)\}$ \label{line:motion alg:choices for static cars}
                    \ElsIf{$v$ has fixed motion from $M$} 
                        \State $C(v) \gets \{\mathop{\vec{D}} y(v) \in \Pos \mid \vec{D} \in M\}$ \label{line:motion alg:choices for fixed motion cars}
                    \ElsIf{$v$ is a dependee} 
                        \State $C(v) \gets$ \texttt{valid\_dependee\_positions}$(G, \A{rel}, v)$ \label{line:motion alg:choices for dependees}
                    \ElsIf{$v$ is not dependent} 
                        \State $C(v) \gets \Pos$ \label{line:motion alg:choices for free nominals}
                    \EndIf
                \EndFor
                \For{$\tilde{y} \in \prod_{v} C(v)$} \label{line:motion alg:nominal assignments loop}
                    \For{$x' \in \texttt{valid\_prop\_assignments}(G, \A{global})$}
                        \State $y' \gets \texttt{complete\_trace}(G, \A{rel}, x', \tilde{y})$ \label{line:motion alg:assign dependent nominals}
                        \State $t' \gets (x', y')$
                        \If{\texttt{check\_global\_assumptions}$(G, [t'], \A{global})$} \label{line:motion alg:check global assumptions}
                            \State \textbf{yield} from $\texttt{\textsc{Extend\_trace}}(G, A, k+1, n, t', \vec{t}\texttt{.append}(t'))$ \label{line:motion alg:generate next trace}
                        \EndIf
                    \EndFor
                \EndFor
            \EndIf
        \EndFunction
    \end{algorithmic}
\end{algorithm}

The main bottleneck in model-checking lies in generating the large number of traces required to test against the specification.
For simplicity of the analysis, suppose that the assignments of atomic propositions are empty at all times. 
If the grid has size $w \times h$, where $n$ is the maximum trace length and $m = |\NP|$, then naively generating all traces (as the baseline model-checker does) yields $\sum_{k=1}^n (w \times h)^{mk}$ traces: for each length $k$ up to $n$, all $m$ nominals must be assigned $k$ times, from $w \times h$ possibilities.
The optimized and motion algorithms improve on the baseline by limiting the number of traces that are generated based on the assumptions; therefore, it is necessary to show that these algorithms yield exactly those traces satisfying the assumptions and also give an upper bound on the number of traces generated. 

\begin{theorem}[Correctness and complexity of the motion algorithm] \label{thm:motion algorithm number of traces generated}
    Let $G$ be a grid graph of size $w \times h$, and let $A = \A{static} \cup \A{rel} \cup \A{fixed} \cup \A{global}$ be a set of assumptions satisfying preconditions~\ref{pre:preconditions for motion algorithm}.
    Additionally suppose that the assignments of atomic propositions are empty at all times.
    Let $s$ be the number of initial states which satisfy $\A{global}$.
    The branching factor $b_v$ of each $v$ is:
    \begin{itemize}
        \item $b_v = 1$ if $v$ is static or dependent,
        \item $b_v = |M|$ if $v$ has fixed motion from $M$,
        \item $b_v = |Pos| = w \times h$ otherwise.
    \end{itemize}
    Then \texttt{\textsc{Generate\_traces}}$(G, A, n)$ (defined in Algorithm~\ref{alg:motion algorithm}) yields exactly the \linebreak 
    traces which satisfy the assumptions $A$, and the number of traces checked is at most
    \[
        O \left( s \cdot \sum_{k=0}^{n-1} \left( \prod_{v \in \NP} b_v \right)^k \right).
    \]
\end{theorem}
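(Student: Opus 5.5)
We prove the two claims separately: correctness by induction on trace length, and the count by a recursion-tree argument on \texttt{\textsc{Extend\_trace}}. Throughout, "satisfies $A$" means the trace satisfies each assumption formula of Definition~\ref{def:model-idioms}, evaluated at an arbitrary viewpoint. All of these formulas begin with $\opAt{v}$ or $\square\opAt{v}$, so their truth is viewpoint-independent.

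\emph{Step 1: per-assumption characterisations.} We first unfold the semantics of each idiom into a step-local condition on consecutive states.
\begin{itemize}
\item A static assumption on $v$ holds iff $y_k(v)=y_0(v)$ for all $k$. The binder fixes $v'$ to $y_0(v)$ along the whole trace, and $\square\opAt{v}v'$ then forces equality at every time.
\item A relative assumption $(v_1,v_2,\vec D)$ holds iff, for every $k$, $\vec D$ applied to $y_k(v_1)$ exists and equals $y_k(v_2)$.
\item A fixed motion assumption $(v,M)$ holds iff, for every $k<n$, $y_{k}(v)=\vec D\,y_{k+1}(v)$ for some $\vec D\in M$. This is the relation that line~\ref{line:motion alg:choices for fixed motion cars} encodes, read with the orientation convention of that line. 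The $\opWX$ makes the condition vacuous at the last step.
\item A global assumption $\square\opAt{v}\varphi$ with only $\square$ inside holds on $\vec t$ iff it holds on each single-state trace $[t_k]$. We prove this by a small induction on $\varphi$: since $\varphi$ has no $\opX$ or $\opU$, the value at a suffix depends only on its head and on later states through $\square$, and $\square$ distributes over the suffixes.
\end{itemize}
This last characterisation justifies checking global assumptions state-by-state at line~\ref{line:motion alg:check global assumptions} and in \texttt{generate\_initial\_states}. It is the step I expect to be the main obstacle, because a nested $\square$ inside $\varphi$ looks ahead along the trace. I would handle it by first showing that $\square\square\psi\leftrightarrow\square\psi$ and that $\square$ commutes with the spatial and hybrid operators, in the spirit of the validities of Section~\ref{subsec:valities}. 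That lets us normalise $\square\opAt{v}\varphi$ to a conjunction of $\square$ applied to $\square$-free formulas.

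\emph{Step 2: soundness and completeness.} We induct on $k$ with the following claim: the calls to \texttt{\textsc{Extend\_trace}} at depth $k$ receive exactly the length-$k$ traces satisfying $A$ (restricted to their prefixes). Each such trace is yielded exactly once, at line~\ref{line:motion alg:trace generated}.
\begin{itemize}
\item \emph{Base case.} \texttt{generate\_initial\_states} returns exactly the states satisfying $\A{rel}$ and $\A{global}$. Static and fixed assumptions impose nothing on a single state.
\item \emph{Soundness of the step.} Any $t'$ built at lines~\ref{line:motion alg:nominal assignments loop}--\ref{line:motion alg:check global assumptions} meets every step-local condition from Step 1.
\item \emph{Completeness of the step.} Take any valid extension $t'$. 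Its restriction to non-dependent nominals lies in $\prod_v C(v)$; for dependees this uses the definition of \texttt{valid\_dependee\_positions}. Its proposition assignment is in \texttt{valid\_prop\_assignments}. By preconditions~3--4, the dependent nominals are uniquely determined, so \texttt{complete\_state} reproduces $t'$ exactly. Precondition~2 ensures that each nominal falls into exactly one branch of the case analysis at line~\ref{line:motion alg:nominal choice computation loop}.
\end{itemize}
Distinct branches yield distinct states, so no trace is produced twice.

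\emph{Step 3: counting.} With empty proposition assignments, the inner loop runs once per $\tilde y$. Each recursion node therefore has at most $|\prod_v C(v)|\le\prod_v b_v$ children. Here we use $|C(v)|\le|M|$ for fixed motion, since distinct $\vec D$ may coincide or leave $\Pos$, and $|C(v)|\le|\Pos|$ for the remaining nominals. Dependent nominals contribute factor $1$. There are $s$ roots, and traces of length $k+1$ sit at depth $k$. Summing over depths $k=0,\dots,n-1$ gives at most $s\sum_{k=0}^{n-1}(\prod_v b_v)^k$ yielded traces. Each of these is checked once in \texttt{Sat\_traces}, which gives the stated $O$-bound.
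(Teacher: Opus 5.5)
Your plan follows the paper's proof: induction on $k$, the base case from \texttt{generate\_initial\_states}, soundness and completeness of one extension step by case analysis on the kind of nominal, and the count via $|C(v)|\le b_v$ with at most $s$ roots. Your unfoldings of the static, relative and fixed motion idioms are fine.

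The genuine gap is the global-assumption item of Step~1, which you correctly flagged as the crux. It is not merely hard: it is false in general once $\varphi$ contains $\square$. Your proposed normalisation cannot rescue it, because $\square$ does not commute with $\opAt{}$; this is the dual of the paper's non-validity $\not\models\opAt{v}\lozenge\varphi\leftrightarrow\lozenge\opAt{v}\varphi$.

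Concretely, take the admissible global assumption $\square\opAt{v}\square w$, with $w$ a nominal. On a trace $\vec t$ it says $y_k(v)=y_l(w)$ for all $k\le l$. On a one-state trace $[t_k]$ it only says $y_k(v)=y_k(w)$.
\begin{itemize}
\item Soundness fails. Consider the two-state trace in which $v$ and $w$ both sit at $p_1$ and then both move to $p_2$. It passes every single-state check, and with no other assumptions the algorithm yields it, yet it violates the assumption.
\item Completeness fails for the dual assumption $\square\opAt{v}\neg\square\neg w$. The trace in which $v$ stays at $p_1$ while $w$ moves from $p_2$ to $p_1$ satisfies it. Its first state, however, fails the single-state test, so \texttt{generate\_initial\_states} discards it.
\item No normal form of the kind you want exists. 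Any conjunction of formulas $\square\chi$ with temporal-free $\chi$ is a statewise condition, whereas $\square\opAt{v}\square w$ relates different time points.
\end{itemize}
So under your reading of ``satisfies $A$'' (the assumption formulas themselves hold on the trace), the ``yields exactly'' part of the theorem is false for such $A$. No argument along your lines can close it.

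To get a correct proof you must change what you prove about $\A{global}$. There are two options.
\begin{itemize}
\item Restrict global assumptions to bodies $\varphi$ without $\square$. Then your characterisation does hold, by an easy induction showing that a temporal-free formula (including $\opAt{}$, $\opBind{}$ and spatial moves) reads only the head state of the trace. Hence $\square\opAt{v}\varphi$ holds on $\vec t$ iff it holds on every $[t_k]$, which is exactly what \texttt{generate\_initial\_states} and \texttt{check\_global\_assumptions}$(G,[t'],\A{global})$ test.
\item Read ``$\vec t$ satisfies $\A{global}$'' statewise, i.e.\ every $[t_k]$ satisfies each global assumption. The paper's proof implicitly takes this reading. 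It concludes that $\vec t.\texttt{append}(t')$ satisfies $\A{global}$ merely because $\vec t$ and $[t']$ do, which is immediate statewise and unjustified otherwise.
\end{itemize}
With either fix, the rest of your argument goes through as in the paper.

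One smaller overstatement remains. Precondition~2 does not make the branches of the $C(v)$ computation mutually exclusive; for instance, a static nominal may consistently also be a dependee. The if/elsif order decides which branch is taken, so argue soundness and completeness for that branch rather than appealing to exclusivity.
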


The worst case for Algorithm~\ref{thm:motion algorithm number of traces generated} occurs when $s = (w \times h)^m$, i.e., there are no global state assumptions, and $b_v = w \times h$ for every $v \in \NP$, i.e., there are no motion assumptions. 
This aligns with the number of traces generated by the baseline checker. 
The addition of static and relative position assumptions improves performance as much as elimination of a nominal. 
For $k$-direction fixed motion, the speedup factor vs.\ baseline is $(w \times h)/k$. 
Such motion is fundamental to vehicle dynamics and appears in every test scenario, and therefore we expect the motion algorithm to provide significant improvements over the baseline.

\section{Evaluation}
\label{sec:evaluation}
To demonstrate the effectiveness of our model-checking algorithms, we evaluate the implementations on a variety of test cases.

\paragraph{Test Scenarios}
We evaluate the model-checking algorithms on seven test cases: two basic formulas and five test scenarios.
The first two are basic formulas of spatiotemporal and hybrid logics.
The next five, in order, are: safe following (Figure \ref{fig:safefollow}); hazard avoidance (Figure \ref{fig:evading}); safe intersection crossing where one road has priority; safe overtaking, in which a vehicle accelerates in a separate lane before merging back; and safe joining a platoon of vehicles in a neighboring lane.
These five scenarios are chosen because they are all fundamental and well-studied driving tasks; all except intersection crossing are focused on highway driving.
In modeling these scenarios, we support the claim that HSTL is flexible enough to model a wide range of concrete driving scenarios.
The test scenarios are designed for scalability: 
the grid size and trace duration can be increased to assess how well each algorithm handles increasing complexity. 
Platooning supports an arbitrary number of vehicles, allowing us to test scalability with respect to the number of nominals in a formula.

\paragraph{Test Results}
Experimental results are given in Table~\ref{tab:experimental-results}.
Tests were run on a workstation with 32 GB of RAM and a i7-14700 processor at 2.1 GHz.
A 10-minute timeout is used; timeouts are indicated with a horizontal line $-$. 
This timeout is appropriate because our algorithms are intended for offline use and a 10-minute timeout allows the test suite to complete within a day.
For online use, in contrast, 10 Hertz control loops are common, and planning timeouts would need to be consistent with such rates.
All reported times are wall-clock times measured in a multitasking environment, so variance is expected.

The first two tests are simple and thus run only once.
The test scenario parameters vary to explore scalability.
Safe following is single-lane, enabling longer lanes.
Safe crossing's grid size and crossing time are related, so we scale them in unison.
Safe overtaking increases duration while keeping grid size fixed.
Platooning's nominal count increases for fixed size and duration.

\begin{table}[!hbt]
\centering
\caption{Experimental results}
\begin{tabular}{ccccccccccc}
Test & Noms.\ & Grid & Len &  \#Sat & \#Trace1 & \#Trace2 & \#Trace3 & Time1 & Time2 & Time3 \\\hline
1     & 1            & (3,3)     & 3 & 819    &  819      & 819 & 819          & 0.0283   &  0.00417 & 0.0292\\\hline
2     & 2            & (3,3)     & 3 & 819    & 538083 & 819 & 538083           & 2.41     & 0.0118   & 3.42\\\hline  
3     & 2            & (3,1)     & 3 & 9       & 819      & 258  &  270          & 0.0118  & 0.0151  & 0.00663\\
4     & 2            & (6,1)     & 3 & 30     & 47988   & 27930 &  4752      & 0.923    &  2.16   & 0.0977\\
5     & 2            & (9,1)     & 3 & 51     & 538083  & 378504  & 24786  & 14.7     &   11.9  & 0.686\\
6     & 2            & (12, 1)  & 3 & 72     & 3006864 & 2317524 & 79488 & 184 &  88.5 &  2.79\\
7     & 2            & (15, 1)  & 3 & 93     & -& -& 195750 &  - & -& 8.38\\
8     & 2            & (18, 1)  & 3 & 114   & -&- & 408240 & - & - & 20.5\\\hline
9     & 2            & (2,2)     & 2 &  32   & 65792   & 65792 & 65792  & 0.284 & 0.280 &  0.799\\
10   & 2            & (2,2)     & 3 &  2080 & 16843008 & 16843008& 16843008 & 65.4 & 96.1 & 109\\
11   & 2            & (2,2)     & 4 & -    & - & - & -      & - & - & - \\\hline
12   & 2            & (2,2)     & 2 & 6       & 272 & 156 & 48 & 0.0216 & 0.00663 & 0.00420\\
13   & 2            & (3,3)     & 3 & 24     & 538083 &  378504  & 2754   &   15.9 & 18.4 & 0.121\\  
14   & 2            & (4,4)     & 4 & 60     & -          & -            & 298240& -         & -            & 25.0 \\\hline
15   & 2            & (4,2)     & 2 & 5      & 4160 & 812 & 480 &  0.203 & 0.0352  &  0.0206\\
16   & 2            & (4,2)     & 3 & 17     & 266304            & 22764 & 6624                 & 11.1          &     1.58   &  0.266\\
17   & 2            & (4,2)     & 4 & 21     & 17043520            & 637420 & 88544                 & 449          &  46.8        & 3.63 \\
18   & 2            & (4,2)     & 5 & 21    & -           & - & 1137120           & -       & -             & 48.2\\\hline
19   & 3            & (5,2)     & 3 &  260 & -           & - & 10850               & -       & -             & 1.75\\
20   & 4            & (5,2)     & 3 & 1122 & -          & - &  34650              & -       & -             & 6.66\\
21   & 5            & (5,2)     & 3 & 4952 & -          & - & 112850             & -        & -             & 25.5 \\
22   & 6            & (5,2)     & 3 & 22410 & -          & - & 376650           & -        & -             & 101.45
\end{tabular}
\label{tab:experimental-results}
\end{table}
We assess the scaling behavior by observing the trends in running time as parameters increase.
Since the optimized algorithms have higher cost per trace, they can be \emph{slower} when little speedup is achievable, e.g., in tests 1,2,9,10,11.
This is expected for 1,2 because they are not driving scenarios.
Likewise, the hazard scenario (9,10,11) is designed to provide no optimization opportunity and to use an atomic proposition.
By demonstrating the poor scalability of atomic propositions, it motivates the use of nominals.
In the other test scenarios, the baseline algorithm times out on larger input sizes, while the optimized and especially motion algorithms continue to perform better.
The motion algorithm performs particularly well for long durations, where its branching factor reductions relative to the optimized algorithm have the opportunity to compound, and in the platooning scenario where there is extensive fixed movement.

\section{Conclusions}
\label{sec:conclusion}
We developed Hybrid Spatiotemporal Logic (HSTL) for automotive safety, focused on highways.
We modeled scenarios like following, passing, intersection-crossing, and platooning.
We developed model-checking algorithms which exploit vehicle dynamics to minimize search space, proved them correct, and evaluated them on the scenarios.

Future work will explore using HSTL as a glue layer between planning and control.
Motion plans can be expressed as traces.
Likewise, a controller's resulting trajectory can be abstracted to a  grid and monitored for plan compliance.

\begin{credits}
\subsubsection{\ackname} 
Radu-Florin Tulcan is funded by the Vienna Science and Technology Fund (WWTF) project ICT22-023, Grant No. 10.47379/ICT22023. 
Yo\`av Montacute is supported by ACT-X, Grant No. JPMJAX24CR, JST, Japan.
Radu-Florin Tulcan, Yoàv Montacute, Kevin Zhou and Ichiro Hasuo are supported by ASPIRE, Grant No. JPMJAP2301, JST, Japan.
Yusuke Kawamoto is supported by FOREST, Grant No. JPMJFR242M, JST, Japan.
\subsubsection{\discintname}
The authors have no competing interests to declare that are
relevant to the content of this article. 
\end{credits}

\bibliographystyle{splncs04}
\bibliography{references}

\newpage 

\appendix

\section{Proofs}
\label{sec:proofs}
Note that the main paper and appendices may vary, for example, in their choice of variable names.
\subsection{Proofs for validities}
\label{sec:proof:validity}
In this section, we show the proofs for the validities presented in Section~\ref{subsec:valities}.

Horizontal and vertical moves commute on every grid point:
\begin{proposition}[Commutation of orthogonal moves]
\label{prop:valid1}
For any formula $\varphi$,
\begin{enumerate}
	\item $\models \opF\opR\varphi\leftrightarrow\opR\opF\varphi$
	\item $\models \opF\opL\varphi\leftrightarrow\opL\opF\varphi$
	\item $\models \opB\opR\varphi\leftrightarrow\opR\opB\varphi$
	\item $\models \opB\opL\varphi\leftrightarrow\opL\opB\varphi$.
\end{enumerate}
\end{proposition}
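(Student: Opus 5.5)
The plan is to unfold the semantic clauses for the spatial modalities and reduce each equivalence to a statement about existence of grid cells. I will do item~1 in detail and obtain the others by the same argument with the index shifts changed. The trace $\vec{t}$ plays no role, since spatial modalities never modify it. Fix an arbitrary grid-graph $G$ with $\Pos=\{p_{i,j}\mid 1\le i\le m,\,1\le j\le n\}$, a trace $\vec{t}$, and a point $p_{i,j}$.

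For item~1, applying the clause for $\opF$ and then the clause for $\opR$ gives the following. $G,\vec{t},p_{i,j}\models\opF\opR\varphi$ holds iff $p_{i+1,j}\in\Pos$, $p_{i+1,j+1}\in\Pos$, and $G,\vec{t},p_{i+1,j+1}\models\varphi$. Symmetrically, $G,\vec{t},p_{i,j}\models\opR\opF\varphi$ holds iff $p_{i,j+1}\in\Pos$, $p_{i+1,j+1}\in\Pos$, and $G,\vec{t},p_{i+1,j+1}\models\varphi$.

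Both sides therefore require $\varphi$ at the same point $p_{i+1,j+1}$ under the same trace. It remains to show that the two existence side-conditions coincide given $p_{i,j}\in\Pos$. This is the only real content, and the only place where rectangularity of the grid is used. Since $\Pos$ is a full $m\times n$ rectangle, membership $p_{a,b}\in\Pos$ is exactly $1\le a\le m$ and $1\le b\le n$. Now $p_{i+1,j+1}\in\Pos$ means $i+1\le m$ and $j+1\le n$. Together with $1\le i$ and $1\le j$ (from $p_{i,j}\in\Pos$), this already implies $p_{i+1,j}\in\Pos$ and $p_{i,j+1}\in\Pos$. So in each case the intermediate-cell condition is implied by the corner condition, and both sides reduce to ``$p_{i+1,j+1}\in\Pos$ and $G,\vec{t},p_{i+1,j+1}\models\varphi$''.

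Items~2--4 follow verbatim, replacing $(i+1,j+1)$ by $(i+1,j-1)$, $(i-1,j+1)$, and $(i-1,j-1)$ respectively. The coordinate-wise bounds argument goes through unchanged: a diagonal neighbour lies in the rectangle iff each coordinate does, which forces both orthogonal intermediate cells to lie in it. Since $G$, $\vec{t}$, $p_{i,j}$ were arbitrary, validity follows.

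I do not expect a genuine obstacle. The one point to be careful about is making explicit that the argument relies on $\Pos$ being a full rectangle. The definition of grid-graph allows extra edges in $E$, but the spatial semantics refer only to index arithmetic in $\Pos$, not to $E$, so extra edges are irrelevant.
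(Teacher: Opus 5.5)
Your proof is correct and follows the same route as the paper's: unfold the semantic clauses of the two spatial modalities on each side, and observe that both sides require $\varphi$ at the same diagonal cell. Your version is slightly more careful than the paper's, because you spell out the rectangularity argument that the paper uses without comment when it passes from $p_{i+1,j+1}\in\Pos$ (together with $p_{i,j}\in\Pos$) to $p_{i,j+1}\in\Pos$.
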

\begin{proof}
We show the proof for the first claim as follows.
Let $G$ be a grid-graph, $\vec{t}$ be a trace, and $p_{i,j} \in \Pos$.
To show the direction from left to right, assume that $G,\vec{t}, p_{i,j} \models \opF\opR \varphi$.
By the definition of the semantics, $p_{i+1,j}\in \Pos$ and $G,\vec{t}, p_{i+1,j} \models \opR \varphi$, hence $p_{i+1,j+1}\in \Pos$ and $G,\vec{t}, p_{i+1,j+1} \models \varphi$.
Then, $p_{i,j+1}\in \Pos$ and $G,\vec{t}, p_{i,j+1} \models \opF \varphi$.
Therefore, $G,\vec{t}, p_{i,j} \models \opR\opF \varphi$. The other direction can be derived analogously.

The other three claims can be proven in an analogous way.
\qed
\end{proof}

Traversing the four sides of a unit square and returning to the start yields the original truth value, whenever the entire cycle exists:
\begin{proposition}[Loops on the grid]
\label{prop:valid2}
For any formula $\varphi$,
\begin{enumerate}
	\item $\models \opF\opR\opB\opL\varphi\rightarrow\varphi$
	\item $\models \opR\opF\opL\opB\varphi\rightarrow\varphi$
	\item $\models \opF\opL\opB\opR\varphi\rightarrow\varphi$
	\item $\models \opB\opR\opF\opL\varphi\rightarrow\varphi$.
\end{enumerate}
\end{proposition}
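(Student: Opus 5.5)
The plan is to unfold the semantics of the four spatial modalities one layer at a time, exactly as in the proof of Proposition~\ref{prop:valid1}, and observe that the composite index displacement is zero. The argument is given in full for the first claim; the other three are the same computation with the four moves permuted.

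Fix a grid-graph $G=(\Pos,E)$, a trace $\vec{t}$, and a point $p_{i,j}\in\Pos$, and assume $G,\vec{t},p_{i,j}\models\opF\opR\opB\opL\varphi$. Unfold the semantics outside-in:
\begin{itemize}
\item the $\opF$ clause gives $p_{i+1,j}\in\Pos$ and $G,\vec{t},p_{i+1,j}\models\opR\opB\opL\varphi$;
\item the $\opR$ clause then gives $p_{i+1,j+1}\in\Pos$ and $G,\vec{t},p_{i+1,j+1}\models\opB\opL\varphi$;
\item the $\opB$ clause gives $p_{i,j+1}\in\Pos$ and $G,\vec{t},p_{i,j+1}\models\opL\varphi$;
\item finally, the $\opL$ clause gives $p_{i,j}\in\Pos$ and $G,\vec{t},p_{i,j}\models\varphi$.
\end{itemize}
This is the required conclusion.

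The key observation is that no spatial modality touches the trace $\vec{t}$. Only $\opX$, $\opU$ and $\opBind{v}$ modify it. So the same $\vec{t}$ is carried through every step, and the final point is the starting point, because the displacements $(+1,0),(0,+1),(-1,0),(0,-1)$ sum to $(0,0)$. The existence side-conditions are simply discarded: they are hypotheses we are handed, not obligations, which is why only an implication (not a biconditional) is claimed.

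For claims 2--4 I would repeat the same four-step unfolding. Claim 2 ($\opR\opF\opL\opB$) passes through $p_{i,j+1},p_{i+1,j+1},p_{i+1,j}$ and returns to $p_{i,j}$. Claim 3 ($\opF\opL\opB\opR$) passes through $p_{i+1,j},p_{i+1,j-1},p_{i,j-1}$ and returns to $p_{i,j}$. Claim 4 ($\opB\opR\opF\opL$) passes through $p_{i-1,j},p_{i-1,j+1},p_{i,j+1}$ and returns to $p_{i,j}$. Each displacement sequence again sums to zero.

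There is no real obstacle. The only point requiring care is that the evaluation point after the last modality is literally the same index pair $p_{i,j}$, and that the trace is unchanged. Both are immediate from the semantic clauses, so the result holds for arbitrary $\varphi$, including formulas with hybrid or temporal subformulas.
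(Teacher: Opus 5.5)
Your proof is correct and matches the paper's proof step for step. The paper unfolds the $\opF$, $\opR$, $\opB$, $\opL$ clauses through $p_{i+1,j}$, $p_{i+1,j+1}$, $p_{i,j+1}$ back to $p_{i,j}$ and calls claims 2--4 analogous; you also spell out the intermediate points for those three cases, and each of them is right.
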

\begin{proof}
We show the proof for the first claim as follows.
Let $G$ be a grid-graph, $\vec{t}$ be a trace, and $p_{i,j} \in \Pos$.
Assume that $G,\vec{t}, p_{i,j} \models \opF\opR\opB\opL\varphi$. 
By the definition of the semantics, $p_{i+1,j}\in \Pos$ and $G,\vec{t}, p_{i+1,j} \models \opR\opB\opL\varphi$, hence $p_{i+1,j+1}\in \Pos$ and $G,\vec{t}, p_{i+1,j+1} \models \opB\opL\varphi$.
Then $p_{i,j+1}\in \Pos$ and $G,\vec{t}, p_{i,j+1} \models \opL\varphi$.
Therefore, $p_{i,j}\in \Pos$ and $G,\vec{t}, p_{i,j} \models \varphi$.

The other three claims can be proven in an analogous way.
\qed
\end{proof}

Spatial modalities modify the grid coordinate while temporal modalities modify the trace index. 
We obtain the commutation of temporal with spatial moves, and the distribution of spatial moves along the until operator.
\begin{proposition}[Space and time interaction]
\label{prop:valid3}
For any formulas $\varphi$ and $\psi$,
\begin{enumerate}
	\item $\models \opX D\varphi\leftrightarrow D\opX\varphi$
	\item $\models \lozenge D\varphi\leftrightarrow D \lozenge \varphi$
	\item $\models \square D\varphi\leftrightarrow D \square \varphi$
	\item $\models D(\varphi\opU\psi)\leftrightarrow(D\varphi)\opU(D\psi)$.
\end{enumerate}
\end{proposition}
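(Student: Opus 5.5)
The plan is to argue semantically, exactly as in Propositions~\ref{prop:valid1} and~\ref{prop:valid2}. Fix a grid-graph $G=(\Pos,E)$, a trace $\vec{t}$ of length $n+1$ and a point $p_{i,j}\in\Pos$. By symmetry it suffices to treat $D=\opF$; the cases $\opB,\opL,\opR$ are verbatim the same with $p_{i+1,j}$ replaced by $p_{i-1,j}$, $p_{i,j-1}$, $p_{i,j+1}$. Write $p'\defeq p_{i+1,j}$.

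The key observation is that the spatial clause for $D$ has two parts: the side condition ``$p'\in\Pos$'' and the recursive evaluation at $p'$ on the \emph{same} trace. The side condition depends only on $G$ and the point, not on the trace or the time index. The temporal clauses, dually, only replace $\vec{t}$ by a suffix $\vec{t}^k$ and keep both the point and $G$ fixed. Hence for every $k$ we have
\[
G,\vec{t}^k,p_{i,j}\models D\chi \iff p'\in\Pos \text{ and } G,\vec{t}^k,p'\models\chi .
\]
The time-independent conjunct $p'\in\Pos$ can be pulled out of, or pushed into, any quantification over time indices.

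With this in hand I would handle the four items in turn.
\begin{enumerate}
\item For $\opX$: both sides unfold to ``$\vec{t}^1$ is well-defined, $p'\in\Pos$, and $G,\vec{t}^1,p'\models\varphi$''. These are the same conjunction in a different order.
\item For $\lozenge=\top\opU\cdot$: the left side unfolds to $\exists k$ with $\vec{t}^k$ defined, $p'\in\Pos$ and $G,\vec{t}^k,p'\models\varphi$; the $\top$ conditions are trivial. Since $p'\in\Pos$ does not mention $k$, it moves outside the existential, giving the right side.
\item For $\square=\neg\lozenge\neg$: I would unfold directly rather than use item 2, since negation does not commute with $D$. The left side says that for all $k\le n$, $p'\in\Pos$ and $G,\vec{t}^k,p'\models\varphi$. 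The right side says $p'\in\Pos$ and for all $k$, $G,\vec{t}^k,p'\models\varphi$.
\item For $\opU$: $(D\varphi)\opU(D\psi)$ says there is a $k$ with $p'\in\Pos$ and $G,\vec{t}^k,p'\models\psi$, and for all $l<k$, $p'\in\Pos$ and $G,\vec{t}^l,p'\models\varphi$. Factoring $p'\in\Pos$ out (it is already guaranteed by the witness $k$) gives $p'\in\Pos$ and $G,\vec{t},p'\models\varphi\opU\psi$, i.e.\ $D(\varphi\opU\psi)$.
\end{enumerate}

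The only delicate point I expect is item 3, together with the corresponding step in item 4. Pulling a time-independent conjunct out of a universal quantifier requires the quantification domain to be nonempty. Otherwise $\square D\varphi$ could hold vacuously at the border of the grid where $p'\notin\Pos$, while $D\square\varphi$ fails. Nonemptiness holds because a trace $(t_0,\dots,t_n)$ always has at least one state, so $k=0$ is available. In item 4, for the same reason, I will use the existential witness $k$ to recover $p'\in\Pos$, and not rely on the (possibly empty) range $l<k$.
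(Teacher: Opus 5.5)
Your proposal is correct and follows essentially the same route as the paper, which also fixes $D=\opF$ and unfolds the semantic clauses into chains of equivalences for the $\opX$ and $\opU$ cases, dismissing the $\lozenge$ and $\square$ cases as analogous. Your explicit remark that the $\square$ case needs the time domain $\{0,\dots,n\}$ to be nonempty before pulling $p_{i+1,j}\in\Pos$ out of the universal quantifier is a real detail that the paper's ``analogously'' leaves implicit. The same goes for your point that $\square$ should not be obtained from $\lozenge$ by naive duality, since $\neg D$ and $D\neg$ differ at the grid border.
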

\begin{proof}
We show the proofs when $D$ is $\opF$. The proofs for the other spatial 
modalities are analogous.
Let $G$ be a grid-graph, $\vec{t}$ be a trace, and $p_{i,j} \in \Pos$.

We prove the first claim by the definition of the semantics as follows:
\begin{align*}
& G,\vec{t}, p_{i,j} \models \opX \opF\varphi
\\ \mbox{iff }~~
& \mbox{$\vec{t}^1$ is well-defined and  }
  G,\vec{t}^1, p_{i,j} \models \opF\varphi
\\ \mbox{iff }~~
& \mbox{$\vec{t}^1$ is well-defined, $p_{i+1,j}\in \Pos$, and }
  G,\vec{t}^1, p_{i+1,j} \models \varphi
\\ \mbox{iff }~~
& \mbox{$p_{i+1,j}\in \Pos$ and }
  G,\vec{t}, p_{i+1,j} \models \opX \varphi
\\ \mbox{iff }~~
& G,\vec{t}, p_{i,j} \models \opF\opX\varphi
\end{align*}

The second and the third claims can be proven analogously.

Next, we prove the last claim as follows.
By the definition of the semantics,
\begin{align*}
& G,\vec{t}, p_{i,j} \models \opF(\varphi\opU\psi)
\\ \mbox{iff }~~
& \mbox{$p_{i+1,j}\in \Pos$, and }
G,\vec{t}, p_{i+1,j} \models \varphi\opU\psi
\\ \mbox{iff }~~
& \mbox{$p_{i+1,j}\in \Pos$, and there is a $k$ s.t. $\vec{t}^k$ is well-defined, }
G,\vec{t}^k, p_{i+1,j} \models \psi,
\\
& \mbox{and }
G,\vec{t}^l, p_{i+1,j} \models \varphi  \text{ for all } \;  l <k
\\ \mbox{iff }~~
& \mbox{there is a $k$ s.t. $\vec{t}^k$ is well-defined, }
G,\vec{t}^k, p_{i,j} \models \opF \psi,
\\
& \mbox{and }
G,\vec{t}^l, p_{i,j} \models \opF \varphi  \text{ for all } \;  l <k
\\ \mbox{iff }~~
& G,\vec{t}, p_{i,j} \models (\opF\varphi)\opU(\opF\psi).
\end{align*}
\hfill\qed
\end{proof}

Several basic validities for standard hybrid logic~\cite{brauner2010hybrid} are also validities in our class of models:
\begin{proposition}[Validity with hybrid modality]
\label{prop:valid4}
For any formula $\varphi$ and any nominals $a, b, v\in\NP$,
\begin{enumerate}
	\item $\models \opBind{v} v$

	\item $\models \opAt{a} a$
	\item $\models \opAt{a} b \rightarrow \opAt{b} a$
	\item $\models \opAt{a} b \wedge \opAt{a} \varphi \rightarrow \opAt{b} \varphi$.
\end{enumerate}
\end{proposition}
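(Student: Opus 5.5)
The plan is to unfold the semantic clauses for $\opBind{v}$, $\opAt{v}$ and nominals directly, exactly as in the proofs of Propositions~\ref{prop:valid1}--\ref{prop:valid3}. Throughout I fix an arbitrary grid-graph $G$, trace $\vec{t}=((x_0,y_0),\dots,(x_n,y_n))$ and point $p_{i,j}\in\Pos$. The key observation, used repeatedly, is that the truth of $\opAt{a}\psi$ does not depend on the current viewpoint: it holds iff $G,\vec{t},y_0(a)\models\psi$. Moreover, $G,\vec{t},p\models b$ iff $p=y_0(b)$.

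For claim 1, the binder clause gives $G,\vec{t},p_{i,j}\models\opBind{v}v$ iff $G,\subst{\vec{t}}{v}{p_{i,j}},p_{i,j}\models v$. The first state of $\subst{\vec{t}}{v}{p_{i,j}}$ is $(x_0,\subst{y_0}{v}{p_{i,j}})$, whose nominal map sends $v$ to $p_{i,j}$, so the nominal clause is satisfied. For claim 2, $G,\vec{t},p_{i,j}\models\opAt{a}a$ iff $G,\vec{t},y_0(a)\models a$ iff $y_0(a)=y_0(a)$, which holds trivially.

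For claim 3, assume $G,\vec{t},p_{i,j}\models\opAt{a}b$. This means $y_0(a)=y_0(b)$. By symmetry of equality, $y_0(b)=y_0(a)$, which is exactly $G,\vec{t},p_{i,j}\models\opAt{b}a$. For claim 4, assume both conjuncts. From $\opAt{a}b$ we again get $y_0(a)=y_0(b)$, and from $\opAt{a}\varphi$ we get $G,\vec{t},y_0(a)\models\varphi$. Substituting the equal point yields $G,\vec{t},y_0(b)\models\varphi$, i.e.\ $G,\vec{t},p_{i,j}\models\opAt{b}\varphi$. No induction on $\varphi$ is needed, since the evaluation point is literally the same element of $\Pos$ and the trace is unchanged.

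The only step needing care is claim 1. There I must check that the substitution $\subst{\vec{t}}{v}{p_{i,j}}$ is applied to the initial state $y_0$ and not only to later states. The definition applies it element-wise to all of $y_0,\dots,y_n$, so it covers $y_0$. Apart from that, every claim is a direct semantic unfolding and I expect no real obstacle.
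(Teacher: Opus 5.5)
Your proposal is correct and takes essentially the same route as the paper: a direct unfolding of the semantic clauses for $\opBind{v}$, $\opAt{a}$ and nominals, with claims 3 and 4 reduced to the equality $y_0(a)=y_0(b)$ followed by substituting one point for the other. Your check that the element-wise substitution $\subst{\vec{t}}{v}{p_{i,j}}$ also updates $y_0$ is the same observation the paper uses for claim 1 when it notes $p_{i,j}=\subst{y_0}{z}{p_{i,j}}(z)$.
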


\begin{proof}
Let $G$ be any grid-graph, $\vec{t} = ((x_0, y_0), \ldots, (x_n, y_n))$ be any trace, and $p_{i,j} \in \Pos$.
\begin{enumerate}
\item 
Notice that that $p_{i,j} = \subst{y_0}{z}{p_{i,j}}(z)$.
Then $G,\subst{\vec{t}}{z}{p_{i,j}}, p_{i,j} \models z$, and hence $G,\vec{t}, p_{i,j} \models \opBind{z} z$.
Therefore, $\models \opBind{z} z$.
\item 
By the definition of the semantics, $G,\vec{t}, y_0(a) \models a$, and hence $G,\vec{t}, p_{i,j} \models \opAt{a} a$.
Therefore, $\models \opAt{a} a$.
\item 
Assume that $G,\vec{t}, p_{i,j} \models \opAt{a} b$.
By definition, $G,\vec{t}, y_0(a) \models b$. Then $y_0(a)=y_0(b)$.
Hence $G, \vec{t}, y_0(b) \models a$, which then implies that $G,\vec{t}, p_{i,j} \models \opAt{b} a$.
Therefore, $\models \opAt{a} b \rightarrow \opAt{b} a$.
\item Assume that $G,\vec{t}, p_{i,j} \models \opAt{a} b \wedge \opAt{a} \varphi$.
Since $G, \vec{t}, p_{i,j} \models \opAt{a} b$, we obtain $y_0(a)=y_0(b)$.
Since $G, \vec{t}, p_{i,j} \models \opAt{a} \varphi$, we obtain $G, \vec{t}, y_0(a) \models \varphi$.
Substituting $y_0(b)$ for $y_0(a)$, we obtain that $G, \vec{t}, y_0(b) \models \varphi$, and hence $G,\vec{t}, p_{i,j} \models \opAt{b} \varphi$.
Therefore, $\models \opAt{a} b \wedge \opAt{a} \varphi \rightarrow \opAt{b} \varphi$.
\end{enumerate}
\qed
\end{proof}

The semantics of the hybrid modalities interacts nontrivially with spatial and temporal structure:
\begin{proposition}[Validity with binders]
\label{prop:valid5}
For any formulas $\varphi$ and $\psi$ and any nominal $v\in\NP$,
\begin{enumerate}
	\item $\models \opBind{v}\varphi\leftrightarrow\opBind{v}\opAt{v}\varphi$
	\item $\models \opBind{v}\opX\varphi\leftrightarrow\opX\opBind{v}\varphi$
	\item $\models \opBind{v}\lozenge\varphi\leftrightarrow\lozenge\opBind{v}\varphi$
	\item $\models \opBind{v}\square\varphi\leftrightarrow\square\opBind{v}\varphi$
	
	\item $\models \opBind{v}(\varphi\opU\psi)\leftrightarrow(\opBind{v}\varphi)\opU(\opBind{v}\psi)$
\end{enumerate}
\end{proposition}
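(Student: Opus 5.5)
The plan is to reduce every item to one structural fact about the substitution operation: for every trace $\vec{t}$ of length $n+1$, every $k \le n$, every nominal $v$ and every point $p$,
\[
\big(\subst{\vec{t}}{v}{p}\big)^k = \subst{(\vec{t}^k)}{v}{p}.
\]
This holds because $\subst{\vec{t}}{v}{p}$ is defined element-wise and does not change the length of the trace. So $(\subst{\vec{t}}{v}{p})^k$ is well-defined exactly when $\vec{t}^k$ is, and both sides are the sequence of states $(x_l, \subst{y_l}{v}{p})$ for $l = k, \dots, n$. I would state this as a small lemma first. The other ingredient is that temporal modalities never change the viewpoint $p_{i,j}$. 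Hence a binder evaluated at a later suffix stores the same point $p_{i,j}$ as a binder evaluated at time $0$.

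For item 1, unfold $G,\vec{t},p_{i,j} \models \opBind{v}\opAt{v}\varphi$. It holds iff $G,\subst{\vec{t}}{v}{p_{i,j}},p_{i,j} \models \opAt{v}\varphi$. The first state of the substituted trace sends $v$ to $p_{i,j}$, so this is iff $G,\subst{\vec{t}}{v}{p_{i,j}},p_{i,j} \models \varphi$, which is exactly the semantics of $\opBind{v}\varphi$.

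For item 2, I would use a chain of equivalences in the style of Proposition~\ref{prop:valid3}. First, $G,\vec{t},p_{i,j}\models\opBind{v}\opX\varphi$ holds iff $(\subst{\vec{t}}{v}{p_{i,j}})^1$ is well-defined and $G,(\subst{\vec{t}}{v}{p_{i,j}})^1,p_{i,j}\models\varphi$. By the lemma, this holds iff $\vec{t}^1$ is well-defined and $G,\subst{(\vec{t}^1)}{v}{p_{i,j}},p_{i,j}\models\varphi$. That is iff $G,\vec{t}^1,p_{i,j}\models\opBind{v}\varphi$, i.e.\ iff $G,\vec{t},p_{i,j}\models\opX\opBind{v}\varphi$.

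Item 5 uses the same argument with the until clause. The witness $k$ and every $l<k$ transfer unchanged between $\subst{\vec{t}}{v}{p_{i,j}}$ and $\vec{t}$, because the suffixes correspond by the lemma and well-definedness is preserved. This is the step that needs the most care. The right-hand side $(\opBind{v}\varphi)\opU(\opBind{v}\psi)$ re-binds $v$ separately at each suffix $\vec{t}^l$, while the left-hand side binds once at time $0$. The point to stress is that both bindings assign the same point $p_{i,j}$, since $\opU$ does not move the viewpoint. The substitution is also applied to all later states in both cases, so the two resulting traces coincide suffix by suffix.

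Items 3 and 4 then follow without reopening the semantics. For item 3, $\opDiamond\varphi = \top\opU\varphi$, and $\opBind{v}\top$ is valid, so item 5 gives $\opBind{v}\opDiamond\varphi \leftrightarrow (\opBind{v}\top)\opU(\opBind{v}\varphi) \leftrightarrow \opDiamond\opBind{v}\varphi$. For item 4, I would first note that the binder is self-dual: $\opBind{v}\neg\varphi\leftrightarrow\neg\opBind{v}\varphi$, immediate from the semantics because substitution is a function of the trace. Then $\opBind{v}\opBox\varphi = \opBind{v}\neg\opDiamond\neg\varphi \leftrightarrow \neg\opDiamond\opBind{v}\neg\varphi \leftrightarrow \neg\opDiamond\neg\opBind{v}\varphi = \opBox\opBind{v}\varphi$, using item 3 and self-duality twice.
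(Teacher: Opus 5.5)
Your proof is correct, and for items 1, 2 and 5 it is the paper's proof: the same chains of semantic equivalences that unfold $\opBind{v}$ and the temporal operator. The paper uses your lemma $(\subst{\vec{t}}{v}{p})^k = \subst{(\vec{t}^k)}{v}{p}$ silently, writing the ambiguous $\vec{t}^1[v \mapsto p_{i,j}]$, so stating it explicitly is a gain in precision. Your version of item 5 also ends with the arguments of $\opU$ in the right order, whereas the paper's last line swaps $\varphi$ and $\psi$.

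The genuine difference is in items 3 and 4. The paper dismisses both as ``analogous to Claim 2'', meaning a direct re-unfolding; for $\opDiamond$ the real template would be item 5, since $\opDiamond$ is an until. You instead derive both as corollaries of item 5, using:
\begin{itemize}
\item the definitions $\opDiamond\varphi = \top\opU\varphi$ and $\opBox\varphi = \neg\opDiamond\neg\varphi$;
\item the validity of $\opBind{v}\top$;
\item the self-duality $\opBind{v}\neg\varphi \leftrightarrow \neg\opBind{v}\varphi$.
\end{itemize}
This leaves no new semantic case to check. It also handles the $\opBox$ case, with its nested negations, which the paper leaves entirely to the reader.

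The cost is one principle you use without stating it: replacement of validly equivalent subformulas under $\opU$ and $\neg$. You need it, for example, to replace $\opBind{v}\top$ by $\top$ in the left argument of $\opU$. It holds because the truth of $\alpha\opU\beta$ at $(\vec{t},p)$ depends only on the truth of $\alpha$ and $\beta$ at the pairs $(\vec{t}^l,p)$. One sentence saying this would complete the write-up.
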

\begin{proof}
Let $G$ be any grid-graph, $\vec{t}$ be any trace, and $p_{i,j} \in \Pos$.
\begin{enumerate}
\item 
By the definition of semantics, we have:
\begin{align*}
& G,\vec{t}, p_{i,j} \models \opBind{v}\varphi
\\ \mbox{iff }~~
& G,\subst{\vec{t}}{v}{p_{i,j}}, p_{i,j} \models \varphi
\\ \mbox{iff }~~
& G,\subst{\vec{t}}{v}{p_{i,j}}, \subst{y_0}{v}{p_{i,j}}(v) \models \varphi
\\ \mbox{iff }~~
& G,\subst{\vec{t}}{v}{p_{i,j}}, p_{i,j} \models \opAt{v} \varphi
\\ \mbox{iff }~~
& G,\vec{t}, p_{i,j} \models \opBind{v}\opAt{v}\varphi
\end{align*}
Therefore, $\models \opBind{v}\varphi\leftrightarrow\opBind{v}\opAt{v}\varphi$.
\item 
By the definition of the semantics, we have:
\begin{align*}
    & G, \vec{t}, p_{i,j}, \models \opBind{v} \opX \varphi \\
    \mbox{iff }~~ & G, \vec{t}[v \mapsto p_{i,j}], p_{i,j} \models \opX \varphi \\
    \mbox{iff }~~ & \vec{t}^1[v \mapsto p_{i,j}] \text{ is well-defined and } G, \vec{t}^1[v \mapsto p_{i,j}], p_{i,j} \models \varphi \\
    \mbox{iff }~~ & \vec{t}^1[v \mapsto p_{i,j}] \text{ is well-defined and } G, \vec{t}^1, p_{i,j} \models \opBind{v} \varphi \\
    \mbox{iff }~~ & G, \vec{t}, p_{i,j} \models \opX \opBind{v} \varphi
\end{align*}

Therefore, $\models \opBind{v}\opX\varphi\leftrightarrow\opX\opBind{v}\varphi$.

\item The proof for Claim 3 is analogous to that for Claim 2.
\item The proof for Claim 4 is analogous to that for Claim 2.
\item By the definition of semantics, we have:
\begin{align*}
    & G, \vec{t}, p_{i,j} \models \opBind{v} (\varphi \opU \psi) \\
    \mbox{iff }~~ & G, \vec{t}[v \mapsto p_{i,j}], p_{i,j} \models \varphi \opU \psi \\
    \mbox{iff }~~ & \exists k \in \{0, \ldots, n\} \text{ such that } \\
    & \quad \vec{t}^k[v \mapsto p_{i,j}] \text{ is well-defined}, \\
    & \quad G, \vec{t}^k[v \mapsto p_{i,j}], p_{i,j} \models \psi, \text{ and} \\
    & \quad \forall l < k, \ G, \vec{t}^l[v \mapsto p_{i,j}], p_{i,j} \models \varphi \\
    \mbox{iff }~~ & \exists k \in \{0, \ldots, n\} \text{ such that } \\
    & \quad \vec{t}^k[v \mapsto p_{i,j}] \text{ is well-defined}, \\
    & \quad G, \vec{t}^k, p_{i,j} \models \opBind{v} \psi, \text{ and} \\
    & \quad \forall l < k, \ G, \vec{t}^l, p_{i,j} \models \opBind{v} \varphi \\
    \mbox{iff }~~ & G, \vec{t}, p_{i,j} \models (\opBind{v} \psi) \opU (\opBind{v} \varphi)
\end{align*}
Therefore, $\models \opBind{v} (\varphi \opU \psi) \leftrightarrow (\opBind{v} \psi) \opU (\opBind{v} \varphi)$.
\end{enumerate}
\end{proof}

\subsection{Proofs for Non-validities}
\label{sec:proof:non-validity}
In this section, we show the proofs for the non-validities presented in Section~\ref{subsec:non-valities}.

In contrast to the validities above, the following formulas
fail on some model in the class of HSTL models. 
Each failure witnesses a different characteristic of the interaction between spatial, temporal and hybrid operations. 
These non-validities are particularly useful in separating models and for general expressivity arguments.

\begin{proposition}[Time-sensitivity of $\opAt{}$]
\label{prop:non-valid-at}
For any formula $\varphi$ and any nominal~$v$,
$$\not\models\opAt{v}\varphi\rightarrow\opX\opAt{v}\varphi.$$
\end{proposition}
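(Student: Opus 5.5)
The plan is to read the claim as a failure of the schema: it suffices to exhibit one grid-graph, trace, viewpoint, nominal and formula $\varphi$ at which $\opAt{v}\varphi$ holds and $\opX\opAt{v}\varphi$ does not. The quantifier ``for any $\varphi$'' cannot be meant literally. For $\varphi = \neg\top$ the antecedent $\opAt{v}\neg\top$ is never satisfied, so the implication is trivially valid.

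\textbf{Main countermodel.} I would build one that reflects genuine time-sensitivity, not just the end of the trace. Take any grid-graph $G$ (a $1\times 1$ grid with the single point $p = p_{1,1}$ suffices), an atomic proposition $a \in \AP$, and $\varphi := a$. Use a trace $\vec{t} = ((x_0,y_0),(x_1,y_1))$ of length $2$ with
\begin{itemize}
\item $y_0(v) = y_1(v) = p$,
\item $x_0(a) = \{p\}$ and $x_1(a) = \emptyset$.
\end{itemize}

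\textbf{Checking the antecedent.} By the semantics of $\opAt{v}$, $G,\vec{t},p \models \opAt{v} a$ iff $G,\vec{t},y_0(v) \models a$. This holds iff $p \in x_0(a)$, which is true.

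\textbf{Checking the consequent.} The suffix $\vec{t}^1 = ((x_1,y_1))$ is well-defined. So $G,\vec{t},p \models \opX\opAt{v} a$ iff $G,\vec{t}^1,y_1(v) \models a$. This holds iff $p \in x_1(a)$, which is false.

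Hence the implication fails at $(G,\vec{t},p)$. A variant makes the point about nominals rather than propositions. Take a second nominal $w$, let $\varphi := w$, and use a $1\times 2$ grid. Put $y_0(v)=y_0(w)=p_{1,1}$ and $y_1(v)=p_{1,1}$, but $y_1(w)=p_{1,2}$. This shows that $\opAt{v}$ re-reads $v$'s position, and every other valuation, at the new time step.

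\textbf{Alternative for arbitrary $\varphi$.} I would also remark that a trace of length $1$ refutes the schema for any $\varphi$ that is true at $y_0(v)$. There $\vec{t}^1$ is undefined, so $\opX\psi$ is false for every $\psi$. The resulting formula-by-formula statement is: the implication fails for every $\varphi$ that is satisfiable at the position of $v$.

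\textbf{Main obstacle.} There is no technical difficulty. The only delicate point is fixing the intended reading of the quantifier over $\varphi$, so that the countermodel actually establishes what the proposition asserts.
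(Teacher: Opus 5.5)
Your main countermodel is exactly the paper's. In both, $v$ stays at one point across times $0$ and $1$ while an atomic proposition holds there at time $0$ and fails at time $1$, so $\opAt{v}q$ holds but $\opX\opAt{v}q$ fails; you simply make the trace explicit where the paper leaves it implicit.

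Your reading of the statement as the failure of the schema is also how the paper treats it: it refutes only the instance $\varphi := q$, and indeed $\varphi = \neg\top$ gives a valid instance. The one small overreach is your closing remark. One-state traces only handle those $\varphi$ that are true at $v$'s position in a one-state trace, which excludes, e.g., $\opX\top$. The general claim still holds, though: take the last suffix $\vec{t}^k$ at which $\opAt{v}\varphi$ holds, and it is a countermodel.
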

\begin{proof}
Consider $q\in\AP$, a point $p_1\in\Pos$ such that $G,\vec{t}, p_1\models v\wedge q$ and $G,\vec{t}^1, p_1\models v\wedge\neg q$.
Then $G,\vec{t},p_1\models \opAt{v}q$ but $G,\vec{t},p_1\not \models \bigcirc\opAt{v}q$
\qed
\end{proof}

\begin{proposition}[Non-commutation of $\opAt{}$ and $\lozenge$]
\label{prop:non-commute-at-future}
For any formula $\varphi$ and any nominal~$v$,
$$\not\models\opAt{v}\lozenge\varphi\leftrightarrow\lozenge\opAt{v}\varphi.$$
\end{proposition}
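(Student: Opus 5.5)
As in the proof of Proposition~\ref{prop:non-valid-at}, I read the statement as saying that the biconditional is not a validity scheme. So it suffices to exhibit one instance, namely $\varphi = q$ for an atomic proposition $q\in\AP$, together with one grid-graph, trace and point that falsify it. The underlying intuition is that $\opAt{v}\lozenge q$ freezes the viewpoint at $v$'s position \emph{at time $0$} and then lets time run. By contrast, $\lozenge\opAt{v} q$ lets time run first and then re-reads $v$'s position at each later time. A countermodel therefore only needs $v$ to move onto a $q$-cell that it did not occupy initially.

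Concretely, I will take a $1\times 2$ grid-graph with $\Pos=\{p_{1,1},p_{1,2}\}$ and a trace $\vec{t}=(t_0,t_1)$ of length $2$. The two states are:
\begin{itemize}
\item $t_0=(x_0,y_0)$ with $y_0(v)=p_{1,1}$ and $x_0(q)=\emptyset$;
\item $t_1=(x_1,y_1)$ with $y_1(v)=p_{1,2}$ and $x_1(q)=\{p_{1,2}\}$.
\end{itemize}
All other propositions and nominals are assigned arbitrarily. I will evaluate both sides at the point $p_{1,1}$; since both sides begin with an operator that resets or ignores the viewpoint, any point would do.

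Step one: show $G,\vec{t},p_{1,1}\models\lozenge\opAt{v}q$. Unfolding $\lozenge=\top\opU(\cdot)$, pick $k=1$. The suffix $\vec{t}^1$ is well-defined, and $G,\vec{t}^1,p_{1,1}\models\opAt{v}q$ holds because $y_1(v)=p_{1,2}\in x_1(q)$. The condition for $l<1$ only requires $\top$.

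Step two: show $G,\vec{t},p_{1,1}\not\models\opAt{v}\lozenge q$. By the semantics this is equivalent to $G,\vec{t},y_0(v)\models\lozenge q$, that is, $G,\vec{t},p_{1,1}\models\lozenge q$. That would require some $k\in\{0,1\}$ with $p_{1,1}\in x_k(q)$, and neither suffix provides one.

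Hence the right-to-left implication fails in this model, so the biconditional is not valid. The only point needing care, and the main (mild) obstacle, is the suffix semantics. In $\vec{t}^1$ the "current" valuation is $(x_1,y_1)$, so $\opAt{v}$ inside $\lozenge$ reads $y_1(v)$ rather than $y_0(v)$. I will state this explicitly when unfolding the definition so that the asymmetry between the two sides is transparent.
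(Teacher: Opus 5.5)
Your proof is correct and takes essentially the same route as the paper's. Both use a two-state trace in which $v$ moves onto a cell satisfying the atomic proposition that it did not occupy at time $0$, so $\lozenge\opAt{v}q$ holds while $\opAt{v}\lozenge q$ fails. The only cosmetic differences are that the paper keeps the hazard fixed at $p_2$ in both states and leaves the needed distinctness of $p_1,p_2$ implicit, whereas you make $q$ empty at time $0$ on an explicit $1\times 2$ grid.
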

\begin{proof}
Consider arbitrary points $p_1,p_2\in\Pos$, the formula $\varphi:=h\in \AP$, and a trace $\vec{t}=\big((x_1,y_1),(x_2,y_2)\big)$, where $ x_1(h)=x_2(h)=\{p_2\}$, $y_1(v)=p_1$ and $y_2(v)=p_2$. 
Then $G, \vec{t},p_1\models \lozenge\opAt{v}\varphi$, but $G,\vec{t},p_1\not\models \opAt{v}\lozenge\varphi$ since $h$ never holds at~$p_1$. 
\qed
\end{proof}

\begin{proposition}[Non-commutation of $D$]
\label{prop:non-commute-at-bind}
For any formula $\varphi$ and any nominal~$v$,
$$\not\models\opAt{v}D\varphi\leftrightarrow D\opAt{v}\varphi,\hspace{1em}\not\models\opBind{v}D\varphi\leftrightarrow D\opBind{v}\varphi.$$
\end{proposition}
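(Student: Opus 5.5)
Both claims are non-validities, so for each I will give one concrete counter-model: a grid-graph $G$, a trace $\vec{t}$ and a point $p_{i,j}$ at which the biconditional fails. As in Propositions~\ref{prop:non-valid-at} and~\ref{prop:non-commute-at-future}, the statement is read as ``there is an instance of $\varphi$ (and $D$) for which the formula is not valid''. I fix $D = \opF$ and take $\varphi$ to be a nominal or an atomic proposition. The other directions are handled symmetrically.

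\textbf{First claim} ($\opAt{v}\opF\varphi \leftrightarrow \opF\opAt{v}\varphi$). The point is that $\opAt{v}$ discards the current viewpoint, so on the right-hand side the preceding $\opF$ only contributes the side condition that $p_{i+1,j}$ exists. I take a grid with at least two rows, a single-state trace $\vec{t} = ((x_0,y_0))$, and $\varphi := q \in \AP$.
\begin{itemize}
\item Put $y_0(v) = p_{1,1}$ and $x_0(q) = \{p_{2,1}\}$.
\item Evaluate at the viewpoint $p_{2,1}$, taking $G$ to have exactly two rows so that $p_{3,1}\notin\Pos$.
\end{itemize}
Then $G,\vec{t},p_{2,1} \models \opAt{v}\opF q$, because $p_{2,1}$ exists and satisfies $q$. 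But $G,\vec{t},p_{2,1}\not\models \opF\opAt{v} q$, because $p_{3,1}\notin\Pos$. Alternatively, if $p_{i+1,j}$ exists, one can instead place $v$ so that $v$ itself fails $q$; the two sides evaluate $q$ at different cells, and either mismatch suffices.

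\textbf{Second claim} ($\opBind{v}\opF\varphi \leftrightarrow \opF\opBind{v}\varphi$). The binder stores the current viewpoint, and $\opF$ changes the viewpoint before or after that store happens. I take $\varphi := \opB v$.
\begin{itemize}
\item On the left, $\opBind{v}\opF\opB v$ at $p_{i,j}$ binds $v$ to $p_{i,j}$, moves to $p_{i+1,j}$, then back to $p_{i,j}$, which satisfies $v$. So the left side holds whenever $p_{i+1,j}\in\Pos$.
\item On the right, $\opF\opBind{v}\opB v$ moves to $p_{i+1,j}$, binds $v$ to $p_{i+1,j}$, then checks whether $p_{i,j}$ equals $p_{i+1,j}$, which is false.
\end{itemize}
So any grid with at least two rows, any trace, and the point $p_{1,1}$ give a counterexample. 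A simpler instance is $\varphi := v$: the right side $\opF\opBind{v} v$ holds by Proposition~\ref{prop:valid4}(1), while the left side $\opBind{v}\opF v$ fails because $p_{i+1,j}\neq p_{i,j}$.

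There is no real obstacle here. The only care needed is reading the statement existentially, since it is false for some choices of $\varphi$, such as $\varphi = \top$ with $v$ irrelevant. I also need to check the boundary-existence clauses of the spatial semantics when choosing the grid size.
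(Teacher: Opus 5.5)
Your proposal is correct and takes the paper's approach of exhibiting explicit counter-models for particular instances of $\varphi$. Your alternative witness $\varphi := v$ for the binder claim is exactly the paper's, and your fallback for the $\opAt{v}$ claim (placing $v$ on a cell that fails $q$ while its front neighbour satisfies $q$) is the paper's $v \wedge \neg q \wedge D q$ configuration. Your primary first-claim witness instead exploits the partiality of $\opF$ at the grid boundary, which is equally valid and works even with $\varphi := \top$. Your explicit existential reading of the statement makes precise what the paper's proof leaves implicit.
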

\begin{proof}
For the first claim, consider $q\in \AP$ and a point $p_1\in\Pos$ such that $G,\vec{t},p_1\models v \wedge \neg q \wedge D q$.
Then $G,\vec{t},p_1\models \opAt{v} Dq$ but $G,\vec{t},p_1\not \models D\opAt{v} q$.

For the second claim, consider $\varphi:=v$, resulting in a contradiction. 
\qed
\end{proof}

\subsection{Proof of Theorem~\ref{thm:correctness and runtime of evaluator}}
\label{sec:proof:MC}
\begin{proof}
    First, we prove correctness. 

    Given a formula $\varphi$, construct the parse tree of $\varphi$, in which each node corresponds to an occurrence of a subformula of $\varphi$. 
    Then, given a grid-graph $G$, trace $\vec{t} = ((x_0, y_0), \dots, (x_n,y_n))$, and position $p_{i,j}$, for each node $\varphi'$ in the parse tree, there is an associated position $p_{\varphi'}$ and updated trace $\vec{t}_{\varphi'}$ given by following the behavior of the spatial and hybrid operators. 

    To be precise, define the positions $p_{\varphi'}$ and trace $\vec{t}_{\varphi'}$ traversing down the parse tree in the following way:
    \begin{itemize}
        \item At the root $\varphi$, $p_{\varphi} = p_{i,j}$ and $\vec{t}_{\varphi} = \vec{t}$.
        \item At a node $\varphi' = \mathop{D} \varphi_1$, the node $\varphi_1$ gets assigned position $p_{\varphi_1}$ obtained by moving $p_{\varphi'}$ in the direction $D$. 
        The trace $\vec{t}_{\varphi_1}$ is the same as $\vec{t}_{\varphi'}$. 
        \item At a node $\varphi' = \opBind{v} \varphi_1$, the node $\varphi_1$ gets assigned trace $\vec{t}_{\varphi_1} = \vec{t}_{\varphi'}[v \mapsto p_{\varphi'}]$. 
        The position $p_{\varphi_1}$ is the same as $p_{\varphi'}$.
        \item At any other node, its children inherit its position and trace.
    \end{itemize}
    
    Now, we prove a stronger claim, from which correctness follows, since \texttt{\textsc{Eval}} returns the value of $\texttt{memo}[(0, \varphi)]$. 
    \begin{claim}
        Fix a grid-graph $G$, a trace $\vec{t} = ((x_0, y_0), \dots, (x_n,y_n))$, a time step $k \in [0, n]$, and a position $p_{i,j}$.
        Fix a formula $\varphi$, and a subformula $\varphi'$ in the parse tree of $\varphi$.
        At the end of the execution of \texttt{\textsc{Eval\_memo}}$(G, \vec{t}, k, p_{i,j}, \varphi)$, if \texttt{memo}$[(k', \varphi')]$ is defined, then 
        \[
            G, (\vec{t}_{\varphi'})^{k'}, p_{\varphi'} \models \varphi' \qquad \iff \qquad \texttt{memo}[(k', \varphi')] = \texttt{True},
        \]
        Moreover, at the end of execution, \texttt{memo}$[(k, \varphi)]$ is defined.
    \end{claim}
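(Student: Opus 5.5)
I will prove the Claim by strong induction on the number of recursive calls made during the execution of \texttt{\textsc{Eval\_memo}}, or equivalently by well-founded induction on pairs $(\varphi', k')$. These pairs are ordered first by the depth of $\varphi'$ in the parse tree (deeper nodes first), and then, for a fixed until-node, by decreasing $k'$. Each entry is written exactly once, at line~\ref{line:evaluator:save result to memo}. Once written it is never overwritten, because the memo check returns early. So it suffices to show that at the moment \texttt{memo}$[(k',\varphi')]$ is written, the stored \texttt{result} equals the truth value of $G,(\vec{t}_{\varphi'})^{k'},p_{\varphi'}\models\varphi'$.

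For this I need an invariant: whenever \texttt{\textsc{Eval\_memo}} is invoked on a node $\varphi'$ of the parse tree, the arguments it receives are the position $p_{\varphi'}$ and a trace whose relevant suffix agrees with $\vec{t}_{\varphi'}$. The invariant is checked by inspecting each case. Spatial cases pass the moved position (line~\ref{line:evaluator:spatial update}). The bind case passes the updated trace (lines up to~\ref{line:evaluator:bind reassignment complete}); since only indices $l\ge k$ are reassigned and only the suffix from $k$ is ever read, this matches $\vec{t}_{\varphi'}[v\mapsto p_{\varphi'}]$ on the suffix. All other cases pass position and trace unchanged. Here the Remark is essential: memo keys are parse-tree occurrences, so a key $(k',\varphi')$ determines the position and trace uniquely, and keying on $(k',\varphi')$ alone is sound.

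Given the invariant, each case is a direct comparison with the clause of the semantics, with the induction hypothesis applied to the recursive calls. I read the atomic cases as accessing $x_{k'}$, $y_{k'}$ (the $k'$-th state of the suffix), and likewise $\opAt{v}$ as jumping to $y_{k'}(v)$. The until case uses the standard unfolding: $\varphi_1\opU\varphi_2$ holds at suffix $k'$ iff $\varphi_2$ holds at $k'$, or $k'<n$ and $\varphi_1$ holds at $k'$ and $\varphi_1\opU\varphi_2$ holds at $k'+1$. This is immediate from the semantics by splitting on whether the witness is $0$ or positive. The recursive call at $(k'+1,\varphi')$ is smaller in the ordering, which justifies the induction. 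The ``moreover'' part follows because every call that does not hit the memo reaches line~\ref{line:evaluator:save result to memo}, and termination holds because the measure strictly decreases along the call graph.

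For the runtime, I count distinct memo keys. There are at most $|\vec{t}|\cdot|\varphi|$ pairs $(k',\varphi')$, and each is computed once with $O(1)$ recursive calls. The cost that dominates is the bind case, which copies and updates the trace in $O(|\vec{t}|)$ time. This gives $O(|\vec{t}|^2\cdot|\varphi|)$ in total.

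The main obstacle is making the invariant precise. It has to handle the discrepancy between the index $k$ and the suffix notation $\vec{t}^{k'}$, including the pseudocode's use of $x_0$ in the atomic and $\opAt{}$ cases, which must be read relative to the current index. It also has to cope with the fact that the bind case mutates only part of the trace. Once the invariant is stated as agreement of suffixes from index $k'$ onward, the remaining steps are routine.
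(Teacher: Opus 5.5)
Your route is essentially the paper's. You induct on the formula and on the remaining trace length, use the parse-tree annotation $p_{\varphi'}$, $\vec{t}_{\varphi'}$ and the expansion law for $\opU$, and match the spatial and bind cases against the annotated position and trace. Your well-founded ordering and suffix-agreement formulation are cleaner than the paper's.

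\textbf{The gap is your invariant at $\opAt{v}$.} The sentence ``all other cases pass position and trace unchanged'' is false there. By your own reading, $\opAt{v}$ jumps to $y_{k'}(v)$, and that position depends on the time at which the $\opAt{}$-node is entered. If the node lies in the scope of an $\opU$, it is entered at several times. If an $\opU$ also lies in its scope, the same key $(k',\varphi')$ is reached from different entry times, hence at different positions. The memo then serves a value computed at the wrong position. So a key does not determine the call context, and the Remark cannot supply this.

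Concretely, let $\varphi=\top\opU\opAt{v}(\top\opU a)$ on the two-state trace with $y_0(v)=p_1\neq p_2=y_1(v)$ and $x_0(a)=x_1(a)=\{p_2\}$.
\begin{itemize}
\item At time $0$ the inner until is evaluated at $p_1$, and its self-recursion stores \texttt{memo}$[(1,\top\opU a)]=$ False.
\item The outer until then moves to time $1$, where $\opAt{v}$ jumps to $p_2$.
\item The call on $(1,\top\opU a)$ hits the memo, so \texttt{memo}$[(1,\opAt{v}(\top\opU a))]$ becomes False.
\end{itemize}
Yet $\opAt{v}(\top\opU a)$ holds on $\vec{t}^1$ at every position, because $p_2\in x_1(a)$. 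So the Claim fails for this entry whatever position is assigned to that node, and \texttt{Eval} returns False on a satisfied formula.

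The paper's proof has the same hole; its annotation even lets the children of $\opAt{v}$ inherit the parent's position. So this is not an idea you missed relative to the paper. Rather, the statement must be repaired before any proof of this shape can go through. Two repairs are available:
\begin{itemize}
\item Restrict formulas so that no $\opAt{}$ lying in the scope of an $\opU$ has an $\opU$ in its own scope. Then the call context is a function of $(k',\varphi')$, and your invariant goes through as you outline, with $p_{\varphi'}$ allowed to depend on $k'$.
\item Add the current position and the current values of the bound nominals to the memo key. This costs you the $O(|\vec{t}|^2\cdot|\varphi|)$ bound.
\end{itemize}
In the $\opAt{}$-free fragment your argument is correct as written. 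There, positions are fixed by the spatial path, and suffix agreement from $k'$ makes binder entry times irrelevant.
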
 
    
    The proof of the claim is by induction on the structure of $\varphi$ and the length of $\vec{t}^k$ (i.e., the value of $n-k$).

    First, notice that the execution of \texttt{\textsc{Eval\_memo}}$(G, \vec{t}, k, p_{i,j}, \varphi)$ will only ever update the value of \texttt{memo}$[(k, p_{i,j}, \varphi)]$, so it suffices to show that this value is correct; other values will be correct by the induction hypothesis.
    If \texttt{memo}$[(k, p_{i,j}, \varphi)]$ is not already defined, then line \ref{line:evaluator:save result to memo} is executed and stores the value of \texttt{result} to \texttt{memo}$[(k, p_{i,j}, \varphi)]$, and hence this value is defined by the end of execution.
    Also, the value of \texttt{memo}$[(k, p_{i,j}, \varphi)]$ is always returned, so it is enough to check the value stored to \texttt{result}.

    The base cases $\top$, $a \in \AP$, and $v \in \NP$ follow directly from the definition of the semantics. 

    The Boolean operations $\neg$ and $\land$, the next operator $\opX$, and the hybrid at operator $\opAt{v}$  follow in a straightforward manner from the induction hypothesis and the definition of the semantics. 
    For $\opX$, we point out that it reflects the semantics that $\opX \varphi_1$ is False when evaluated at the end of a trace.

    \bigskip

    The case $\varphi_1 \opU \varphi_2$ mirrors the typical expansion law for $\opU$ in LTL (see e.g. \cite{baier2008principles}); however, for completeness, we give the full argument.
    First, suppose that \texttt{result} gets value True. 
    Then either \texttt{\textsc{Eval\_memo}}$(G, \vec{t}, k, p_{i,j}, \varphi_2)$ is true, or $k < n$, and both \texttt{\textsc{Eval\_memo}}$(G, \vec{t}, k, p_{i,j}, \varphi_1)$ and \texttt{\textsc{Eval\_memo}}$(G, \vec{t}, k+1, p_{i,j}, \varphi_1 \opU \varphi_2)$ evaluate to True.
    In the case that \texttt{\textsc{Eval\_memo}}$(G, \vec{t}, k, p_{i,j}, \varphi_2)$ is true, then $G, \vec{t}^k, p_{i,j} \models \varphi_2$ by induction hypothesis, and thus $G, \vec{t}^k, p_{i,j} \models \varphi_1 \opU \varphi_2$.
    Otherwise,  $k < n$, and both \texttt{\textsc{Eval\_memo}}$(G, \vec{t}, k, p_{i,j}, \varphi_1)$ and \texttt{\textsc{Eval\_memo}}$(G, \vec{t}, k+1, p_{i,j}, \varphi_1 \opU \varphi_2)$ evaluate to True. 
    By induction hypothesis, we have that $G, \vec{t}^k, p_{i,j} \models \varphi_1$ and $G, \vec{t}^{k+1}, p_{i,j} \models \varphi_1 \opU \varphi_2$. 
    A standard analysis of the semantics of $\opU$ yields that $G, \vec{t}^k, p_{i,j} \models \varphi_1 \opU \varphi_2$.

    Conversely, suppose that \texttt{result} gets value False. 
    This means that \linebreak
    \texttt{\textsc{Eval\_memo}}$(G, \vec{t}, k, p_{i,j}, \varphi_2)$ evaluates to False, and if $k < n$, then at least one of \texttt{\textsc{Eval\_memo}}$(G, \vec{t}, k, p_{i,j}, \varphi_1)$ and \texttt{\textsc{Eval\_memo}}$(G, \vec{t}, k+1, p_{i,j}, \varphi_1 \opU \varphi_2)$ evaluates to False. 
    In the case that \texttt{\textsc{Eval\_memo}}$(G, \vec{t}, k, p_{i,j}, \varphi_1)$ is False, then by induction hypothesis, $G, \vec{t}^k, p_{i,j} \not\models \varphi_1$, and hence $G, \vec{t}^k, p_{i,j} \not\models \varphi_1 \opU \varphi_2$. 
    Otherwise, \texttt{\textsc{Eval\_memo}}$(G, \vec{t}, k+1, p_{i,j}, \varphi_1 \opU \varphi_2)$ is False, so by induction hypothesis, $G, \vec{t}^{k+1}, p_{i,j} \not\models \varphi_1 \opU \varphi_2$, and hence $G, \vec{t}^k, p_{i,j} \not\models \varphi_1 \opU \varphi_2$. 

    \bigskip

    For the spatial operators $D \varphi_1$, with $D \in \{\opF, \opB, \opL, \opR\}$, first note that the position $p_{\varphi_1}$ (as defined earlier) is simply $p_{i,j}$ moved one position in the direction $D$. 
    This is exactly the same position as $p'$ from line \ref{line:evaluator:spatial update}, and hence \texttt{result} is true if and only if \texttt{\textsc{Eval\_memo}}$(G, \vec{t}, k, p', \varphi_1)$ is true, which by induction hypothesis holds if and only if $G, \vec{t}^{k}, p' \models \varphi_1$. 
    This holds if and only if $G, \vec{t}^{k}, p_{i,j} \models D \varphi_1$, as desired.

    \bigskip

    For the bind operator $\opBind{v} \varphi_1$, first note that the trace associated with $\varphi_1$ is $\vec{t}_{\varphi_1} = \vec{t}[v \mapsto p_{i,j}]$, which is the same as the resulting value of $\vec{t}'$ after completing the for loop in line \ref{line:evaluator:bind reassignment complete}. 
    Thus \texttt{result} is true if and only if \texttt{\textsc{Eval\_memo}}$(G, \vec{t}[v \mapsto p_{i,j}], p_{i,j})$ is true, which by induction hypothesis holds if and only if $G, \vec{t}[v \mapsto p_{i,j}], p_{i,j} \models \varphi_1$.
    This holds if and only if $G, \vec{t}, p_{i,j} \models \opBind{v} \varphi_1$, as desired.

    \bigskip
    
    This completes the proof of correctness. 
    It remains to show the proof of the bound on the runtime. 
    Since \texttt{\textsc{Eval\_memo}} utilizes memoization, it is necessary to find the number of distinct subproblems and multiply by the complexity of solving each subproblem. 
    For the number of subproblems, notice that the keys to the memoization table are of the form $(k, \varphi')$, with $0 \leq k \leq |\vec{t}|$, and $\varphi'$ a subformula of $\varphi$. 
    Therefore, the number of subproblems is $O(|\vec{t}| \cdot |\varphi|)$. 

    As for the complexity of each subproblem, it is easy to check that in a single recursive call of \texttt{\textsc{Eval\_memo}}, all cases except for the bind operator $\opBind{v} \varphi_1$ only perform a constant number of operations. 
    In the case of the bind operator, \texttt{\textsc{Eval\_memo}} will make a copy of $\vec{t}$ and edit $|\vec{t}|-k$ of the copy's positions, so this requires $O(|\vec{t}|)$ steps. 

    Thus in total, \texttt{\textsc{Eval\_memo}} runs in $O(|\vec{t}|^2 \cdot |\varphi|)$ time.
\qed
\end{proof}

\section{Proof of Theorem~\ref{thm:motion algorithm number of traces generated}}
\begin{proof}
    First, since we assume that the assignments of atomic propositions are empty at all times, we will abuse notation slightly and write $t(v)$ to denote the position of nominal $v$ in state $t$ instead of $y(v)$.
    
    Now, note that a trace is yielded exactly when \texttt{\textsc{Extend\_trace}} is called (line \ref{line:motion alg:trace generated}).
    In particular, the trace that is generated is the fifth coordinate of the input to \texttt{\textsc{Extend\_trace}}.
    We prove that the following statements are true during the execution of \texttt{\textsc{Generate\_traces}} for all values $k=1, \ldots, n$, by induction on $k$:
    \begin{enumerate}
        \item Whenever $\texttt{\textsc{Extend\_trace}}(G, A, k, n, t, \vec{t})$ is called, $\vec{t}$ has length $k$, and $t$ is the last state of $\vec{t}$.
        \item The trace generated when $\texttt{\textsc{Extend\_trace}}(G, A, k, n, t, \vec{t})$ is called satisfies $A$.
        \item Every trace of length $k$ satisfying $A$ is generated.
        \item There are at most $O(s \cdot \left( \prod_{v \in \NP} b_v \right)^{k-1})$ such traces.
    \end{enumerate}
    The theorem follows from items 2-4. 

    \bigskip

    \textbf{Base case:} $k = 1$. 

    For item 1, the only invocations of $\texttt{\textsc{Extend\_trace}}(G, A, 1, n, t, \vec{t})$ occur when line \ref{line:motion alg:yield length 1 trace} is executed for each initial state, in which $\vec{t}$ has length 1, and the argument for $t$ is exactly the only element of $\vec{t}$.

    For item 2, when $\texttt{\textsc{Extend\_trace}}(G, A, 1, n, t, \vec{t})$ is called, the value of $\vec{t}$ is of the form \texttt{[initial\_state]}, where \texttt{inital\_state} is given by \linebreak
    $\texttt{generate\_initial\_states}(G, \A{rel}, \A{global})$, which outputs those states which satisfy $\A{rel}$ and $\A{global}$. 
    For traces of length 1, static car and fixed motion assumptions do not impose any restrictions, so it is enough to just check relative motion and global state assumptions. 
    This also proves item 3, since every such trace will be generated.

    Item 4 is satisfied since $O(s \cdot \left( \prod_{v \in \NP} b_v \right)^{1-1}) = O(s)$, and $s$ is precisely the number of valid initial states. 

    \bigskip

    \textbf{Step case:} 
    Suppose that all four items are true for $1 \leq k < n$. 
    We want to show they are true for $k+1$.

    For item 1, we see that \texttt{\textsc{Extend\_traces}} is called with current length input $k+1$ in line \ref{line:motion alg:generate next trace} when executing $\texttt{\textsc{Extend\_traces}}(G, A, k, n, t, \vec{t})$. 
    By the induction hypothesis, the trace $\vec{t}$ has length $k$, so the argument $\vec{t}.\texttt{append}(t')$ has length $k+1$, as desired, and also has $t'$ as its last state. 

    For item 2, we must check that in line \ref{line:motion alg:generate next trace}, the trace $\vec{t}.\texttt{append}(t')$ satisfies $A$. 
    The line \ref{line:motion alg:check global assumptions} ensures that $t'$ satisfies $\A{global}$.
    By the induction hypothesis, $\vec{t}$ also satisfies $\A{global}$, and therefore $\vec{t}.\texttt{append}(t')$ must as well.
    To analyze the motion assumptions, we see that the loop in line \ref{line:motion alg:nominal choice computation loop} assigns a set of choices $C(v)$ to all non-dependent nominals. 
    Once this is done, the loop in line \ref{line:motion alg:nominal assignments loop} checks over all possible assignments of nominals, chosen from these sets $C(v)$.
    Finally, line \ref{line:motion alg:assign dependent nominals} fixes the position of dependent nominals according to the positions of dependee nominals and $\A{rel}$. 
    $\A{static}$ is satisfied since a static nominal $v$ is given only one choice for its position (line \ref{line:motion alg:choices for static cars}), which is its location in the previous timestep $y(v)$. 
    $\A{rel}$ is satisfied since \texttt{complete\_trace} in line \ref{line:motion alg:assign dependent nominals} respects $\A{rel}$. 
    $\A{fixed}$ is satisfied since a fixed motion nominal $v$ is given exactly the (valid) choices from its motion set $M$ (line \ref{line:motion alg:choices for fixed motion cars}). 

    For item 3, let $\vec{t}' = \vec{t}.\texttt{append}(t')$ be a trace of length $k+1$ satisfying $A$.
    Let $t$ denote the last state of $\vec{t}$. 
    By the induction hypothesis, $\vec{t}$ must have been generated at some point during execution, and thus $\texttt{\textsc{Extend\_trace}}(G, A, k, n, t, \vec{t})$ must have been executed.
    Therefore, it suffices to check whether or not $t'$ is generated as a valid extension for which line \ref{line:motion alg:generate next trace} is executed. 

    Since $\vec{t}'$ satisfies $\A{static}$, for every static nominal $v$, $t'(v) = t(v)$.
    This is an allowed choice (in particular, the only allowed choice) in line \ref{line:motion alg:choices for static cars}.
    For fixed motion nominals, since $\vec{t'}$ satisfies $\A{fixed}$, the position $t'(v)$ must be within the set $\{\mathop{\vec{D}} t(v) \in \Pos \mid \vec{D} \in M\}$, which is exactly the allowed choices in line \ref{line:motion alg:choices for fixed motion cars}. 
    For dependee nominals, since $\vec{t'}$ satisfies $\A{rel}$, $t'(v)$ must be so that all dependents will fit into $G$; these choices are allowed in line \ref{line:motion alg:choices for dependees}.
    For dependent nominals, their positions are determined by $\A{rel}$ and the positions of their respective dependees---this is handled in line \ref{line:motion alg:assign dependent nominals}. 
    For all other nominals, their positions are allowed to be chosen arbitrarily (line \ref{line:motion alg:choices for free nominals}), and so the position $\vec{t'}(v)$ is included as an allowed choice. 
    Finally, $t'$ passes line \ref{line:motion alg:check global assumptions} since $\vec{t'}$ satisfies $\A{global}$. 
    Hence $\vec{t}'$ is generated.

    For item 4, we must count the number of traces for which line \ref{line:motion alg:generate next trace} is called. 
    For each generated trace of length $k$, this occurs as many times as possible values for $t'$ are generated. 
    By induction hypothesis, there are at most $O(s \cdot \left( \prod_{v \in \NP} b_v \right)^{k-1})$ traces of length $k$ generated, so it remains to count the number of values of $t'$ which are generated. 
    Notice that the number of such $t'$ is at most the size of $\prod_v C(v)$ as in line \ref{line:motion alg:nominal assignments loop}. 
    By definition, $|C(v)| \leq b_v$ (except on dependent nominals, where $C(v)$ is not defined, but in that case $b_v = 1$). 
    Thus the number of such $t'$ is at most $\prod_{v \in \NP} b_v$. 
    Hence the total number of traces of length $k+1$ generated is at most
    \[
        O\left(s \cdot \left( \prod_{v \in \NP} b_v \right)^{k-1} \right) \cdot \prod_{v \in \NP} b_v = O \left( s \cdot \left(\prod_{v \in \NP} b_v \right)^{k} \right),
    \]
    which is as desired for item 4. 
    \qed
\end{proof}

\section{Test Cases}
This appendix documents the test cases used in the evaluation section, as implemented in our Python code.
Each test case is structured as a Python function which passes an HSTL specification to the \verb|run_evaluator| helper function, which has the following arguments:
\begin{verbatim}
"""
Returns for a given formula all (trace, points) tuples where 
the formula holds.

:param run_id: the id of the run
:param propositions: the list of propositions used in the formula
:param nominals: the list of nominals used in the formula
:param assumptions: the list of formulas used as assumptions
:param conclusions: the list of formulas used as conclusions
:param grid_size: the grid size used for building the traces
:param trace_max_length: the maximal length of traces to consider
:param show_traces: whether the (trace, point) tuples should be 
    displayed in the console for debugging
:param evaluator_function: model checker evaluation function, 
    which can be any of the 3 checkers
"""
\end{verbatim}
Do not be misled by the names \verb|assumptions| and \verb|conclusions|. All these formulas are treated the same in a logical sense. 
It is only a style convention that \verb|conclusions| contains the final system safety properties such as collision-avoidance, while \verb|assumptions| contains conditions meant to configure the scenario, such as specifications of vehicle trajectories.
The six test case functions also have arguments. For example, \verb|test_index| is simply the corresponding row number in the evaluation table from the paper.
We present the code for each test function.

\begin{verbatim}
def left_right_test(test_index: int,evaluator_function: Callable):
    """
    Tests a spatial validity.

   :param test_index: test index
   :param evaluator_function: function of the checker for 
       evaluating the formula
   """
    run_evaluator(test_index, [], ['z'], [], 
       ["G(Left(Right(z)) <-> Right(Left(z)))"], (3, 3), 3, False,
       evaluator_function)
\end{verbatim}

\begin{verbatim}
def same_name_test(test_index: int, evaluator_function: Callable):
    """
    Tests a hybrid formula.

   :param test_index: test index
   :param evaluator_function: function of the checker for 
      evaluating the formula
   :return:
   """
    run_evaluator(test_index, [], ['z', 'z1'], [], ["G (@z z1)"],
      (3, 3), 3, False, evaluator_function)
\end{verbatim}

\begin{verbatim}
def one_lane_follow_test(test_index: int, duration: int,
  road_length: int, evaluator_function: Callable):
    """
    Test whether vehicle can safely follow another in the same 
    lane. A detailed description can be found in 
    README.md/Experiments/One Lane Follow.

    :param test_index: test index
    :param duration: maximal length of the traces
    :param road_length: length of the one-lane road
    :param evaluator_function: function of the checker for 
      evaluating the formula
    """
    run_evaluator(test_index, [], ['z0', 'z1'],
      [# SV is initially at the start of the lane
       "@z0 !(Back 1)",  
       # POV always moves forward or stays put
       "G (@z1 ↓z2 ((! X 1) | X @z1  (z2 | Back z2)))", 
       "G (@z0 ↓z2 ((! X 1) | X (@z0 ((!z1 & Back z2 ) " + 
       "| (z2 & Front z1) ))))"],
       # SV Always moves forward if safe, stays put if POV ahead
       ["G(!(@z0 z1))"], (road_length, 1), duration, False, 
         evaluator_function)
\end{verbatim}

\begin{verbatim}
def hazard_test(test_index: int, duration: int, 
   evaluator_function: Callable):
    """
    Tests whether vehicle can avoid static hazard in presence 
    of another vehicle.Figure 3 in paper
    
    :param test_index: test index
    :param duration: maximal length of traces
    :param evaluator_function: function of the checker for 
      evaluating the formula
    """
    width = 2
    length = 2 
    def fronts(i: int, p: str):
        if i == 0:
            return "({})".format(p)
        else:
            return "(Front {})".format(fronts(i-1, p))
    def bfront(p: str):
        each = ["(({})->({}))".format(fronts(i+1,"1"),
            fronts(i+1,p)) for i in range(0,length)]
        return "({})".format(reduce((lambda x, acc: x+"&"+acc),
          each))
    def dfront(p: str):
        each = [fronts(i+1,p) for i in range(0,length)]
        return "({})".format(reduce((lambda x, acc: x+"|"+acc),
          each))
    p1="(Right z1) & {}".format(dfront("G h"))
    p2="(@z0 ↓z2 X @z0 ((Back z2) & (G ! h)))"
    p3="(@z0 ↓z2 X @z0((Left z2) & {} & {}))"
         .format(dfront("z1"),bfront("G ! h"))
    full="@z0 (({}) & (({}) U ({})))".format(p1,p2,p3)
    run_evaluator(test_index, ["h"], ["z0", "z1"], [], [full], 
        (length,width), duration, False, evaluator_function)
\end{verbatim}

\begin{verbatim}
def safe_intersection_priority(test_index: int, duration: int, 
    grid_size: int, evaluator_function: Callable):
    """
    Test whether vehicle can go through an intersection safely.
    A detailed description can be found in 
    README.md/Experiments/Safe Intersection with Priority.

    :param test_index: test index
    :param duration: maximal length of the traces
    :param road_length: width and length of the road
    :param evaluator_function: function of the checker for
      evaluating the formula
    """
    run_evaluator(test_index, [], ['z0', 'z1'],
                  [# z1 starts somewhere on the left border
                  "@z1 !(Left 1)",  
                  # z0 starts somewhere on the bottom border
                   "@z0 !(Back 1)",  
                   # Moves left-to-right always
                   "G (@z1 ↓z2 ((! X 1)| X @z1 (Left z2)))",  
                   "G (@z0 ↓z2 ((! X 1)| X @z0 ((!z1 & Back z2)"+
                   " | (z2 & Front z1) )))"],
                  # Moves bottom-to-top but stop  to avoid others
                  ["G(!(@z0 z1))"], (grid_size, grid_size), 
                  duration, False, evaluator_function)
\end{verbatim}

\begin{verbatim}
def safe_passing(test_index: int, duration: int, 
   road_length: int, evaluator_function: Callable):
    """
    Tests maneuvers of vehicles: speed-up, swerve left and right
     A detailed description can be found 
     in README.md/Experiments/Safe Passing.

    :param test_index: test index
    :param duration: maximal length of the traces
    :param road_length: width and length of the road
    :param evaluator_function: function of the checker for 
       evaluating the formula
    """
    # z0 initially moves forward
    first_forward = "(@z0 ↓z2 ((! X 1) | X @z0 (Back z2)))"
    # then swerves left to avoid z1
    dodge_left = ("(@z0 ↓z2 ((Front z1) & "+
       "((! X 1)| X (@z0 (Back (Right z2))))))")
    # then drives twice as fast
    fast_forward = "(@z0 ↓z2 ((! X 1)| X @z0 (Back (Back z2))))"
    # then dodges back when safe
    dodge_right = "(@z0 ↓z2 ((! X 1)| X @z0 (Back (Left z2))))"
    # then drives normally
    last_forward = "(@z0 ↓z2 ((! X 1) | X @z0 (Back z2)))"
    run_evaluator(test_index, [], ['z0', 'z1'],
      [# POV starts anywhere in right lane, stays in right lane
       "G(@z1 !(Right 1))",  
       # SV starts in back of right lane
       "@z0 !(Right 1)",  
       "@z0 !(Back 1)",
       # z1 moves forward or stays in place
       "G (@z1 ↓z2 ((! X 1) | X @z1  (z2 | Back z2)))",  
       "({} U ({} & ((! X 1) | X ({} & ((! X 1) | X ({} U ({} & "+
       "((! X 1) | X G ({})))))))))".format(
         first_forward, dodge_left, fast_forward, fast_forward, 
         dodge_right, last_forward)],
      ["G(!(@z0 z1))"], (road_length, 2), duration, False,
         evaluator_function)
\end{verbatim}

\begin{verbatim}
def join_platoon(test_index: int, duration: int, 
   platoon_size: int, road_length: int, 
   evaluator_function: Callable):
    """
    Tests safe joining of a vehicle to a platoon of vehicles.

    :param test_index: test index
    :param duration: maximal length of the traces
    :param platoon_size: size of vehicle platoon
    :param road_length: width and length of the road
    :param evaluator_function: function of the checker for 
       evaluating the formula
    """
    pov_noms = ["z" + str(i + 1) for i in range(platoon_size)]
    noms = ["z0"] + pov_noms  # and z is a temporary
    no_collide = "!({})".format(reduce((lambda x, acc: x+"|"+acc),
      pov_noms))
    each_front = ["Front " + n for n in pov_noms]
    some_front = reduce((lambda x, acc: x + "|" + acc), 
      each_front)
    sv_mov_assump = (
       "G(@z0 ↓z ((! X 1) | (X @z0((Back z)|(({0})&"+
       "(Right z)&({1}))))))".format(some_front, no_collide))
    sv_start_assump = "@z0 !(Right 1)"
    pov_start_assumps = ([format("G(@z{0} !(Left 1))".
       format(str(i + 1))) for i in range(platoon_size)])
    pov_mov_assumps = (
      [format("G(@z{0}↓z ((! X 1)|X(@z{0} (Back z))))"
      .format(str(i + 1))) for i in range(platoon_size)])
    assumps = ([sv_start_assump, sv_mov_assump] + 
      pov_mov_assumps + pov_start_assumps)
    postcond = "G(@z0 ({}))".format(no_collide)
    run_evaluator(test_index, [], noms, assumps, [postcond],
     (road_length, 2), duration, False, evaluator_function)
\end{verbatim}

The evaluation table was prepared by running the following main function and  extracting the output of \verb|run_evaluator|. In the code, the motion algorithm is \verb|evaluate_optimized2|.

\begin{verbatim}
if __name__ == '__main__':
    for funct in [evaluate_baseline, 
                  evaluate_optimized1,
                  evaluate_optimized2]: 
        # Test 1
        left_right_test(1, funct)
        # Test 2
        same_name_test(2, funct)
        # Test 3
        one_lane_follow_test(3, 3, 3, funct)
        one_lane_follow_test(4, 3, 6, funct)
        one_lane_follow_test(5, 3, 9, funct)
        one_lane_follow_test(6, 3, 12, funct)
        one_lane_follow_test(7, 3, 15, funct)
        one_lane_follow_test(8, 3, 18, funct)
        # Test 4
        hazard_test(9, 2, funct)
        hazard_test(10, 3, funct)
        # Test 5
        safe_intersection_priority(11, 2, 2, funct)
        safe_intersection_priority(12, 3, 3, funct)
        safe_intersection_priority(13, 4, 4, funct)
        # Test 6
        safe_passing(14, 2, 4, funct)
        safe_passing(15, 3, 4, funct)
        safe_passing(16, 4, 4, funct)
        safe_passing(17, 5, 4, funct)
        # Test 7
        join_platoon(18, 3, 2, 5, funct)
        join_platoon(19, 3, 3, 5, funct)
        join_platoon(20, 3, 4, 5, funct)
        join_platoon(21, 3, 5, 5, funct)
\end{verbatim}
\end{document}